\newtheorem{theorem}{Theorem}
\newtheorem{assumption}{Assumption}
\newtheorem{remark}{Remark}
\newtheorem{lemma}{Lemma}
\newtheorem{corollary}{Corollary}
\newtheorem{proposition}{Proposition}
\newtheorem{definition}{Definition}
\title{Secrecy Capacity of Colored Gaussian Noise Channels with Feedback}
\author{\quad Chong Li, Yingbin Liang, H. Vincent Poor and Shlomo Shamai (Shitz) %
\thanks{Chong Li is with Qualcomm Research, Bridgewater, NJ, 08807, USA (chongl@qti.qualcomm.com). Yingbin Liang is with the Department of Electrical and Computer Engineering, The Ohio State University, Columbus, OH, 43220, USA (email: liang.889@osu.edu). H. Vincent Poor is with the Department of Electrical Engineering, Princeton University, Princeton, NJ, 08544, USA (email: poor@princeton.edu). Shlomo Shamai (Shitz) is with the Department of Electrical Engineering, Technion-Israel Institute of Technology, Technion City, Haifa 32000, Israel (email: sshlomo@ee.technion.ac.il)}
\thanks{The work of Y. Liang has been supported by the U.S. National Science Foundation under the grant CCF-1618127. The work of H. V. Poor has been supported by the U.S. National Science Foundation under the grants CNS-1702808 and ECCS-1647198. The work of S. Shamai has been supported by the European Union's Horizon 2020 Research And Innovation Programme, grant agreement no. 694630.}
\thanks{This paper was presented (in part) in \cite{Chong17_secrecy}.} %
}
\begin{document}
\maketitle \thispagestyle{empty} \pagestyle{plain}

\begin{abstract}
In this paper, the $k$-th order autoregressive moving average (ARMA(k)) Gaussian wiretap channel with noiseless causal feedback is considered, in which an eavesdropper receives noisy observations of the signals in both forward and feedback channels. It is shown that a variant of the generalized \textit{Schalkwijk-Kailath} scheme, a capacity-achieving coding scheme for the feedback Gaussian channel, achieves the same maximum rate for the same channel with the presence of an eavesdropper. Therefore, the secrecy capacity is equal to the feedback capacity without the presence of an eavesdropper for the feedback channel. Furthermore, the results are extended to the additive white Gaussian noise (AWGN) channel with quantized feedback. It is shown that the proposed coding scheme achieves a positive secrecy rate. As the amplitude of the quantization noise decreases to zero, the secrecy rate converges to the capacity of the AWGN channel.
\end{abstract}

\begin{IEEEkeywords}
Secrecy Capacity, Feedback, Colored Gaussian, Schalkwijk-Kailath Scheme
\end{IEEEkeywords}


%
%
%

\section{Introduction}

It has been more than a half century since the information theorists started to investigate the capacity of feedback Gaussian channels. As the pioneering studies on this topic, Shannon's 1956 paper \cite{shannon56} showed that feedback does not increase the capacity of the memoryless AWGN channel, and Elias \cite{Elias1956} \cite{Elias1967} proposed some simple corresponding feedback coding schemes. Schalkwijk and Kailath \cite{Schalkwijk66} \cite{Schalkwijk66_2} then developed a notable linear feedback coding scheme to achieve the capacity of the feedback AWGN channel. Thereafter, the problem of finding the feedback capacity and the capacity-achieving codes for the memory Gaussian channels (e.g. ARMA(k)) has been extensively studied. Butman \cite{Butman69} \cite{Butman76}, Wolfowitz \cite{Wolfowitz75} and Ozarow \cite{Ozarow_random90} \cite{Ozarow_upper90} extended Schalkwijk's scheme to ARMA(k) Gaussian channels, leading to several valuable upper and lower bounds on the capacity. Motivated by these elegant results/insights, in $1989$ Cover and Pombra \cite{cover89} made a breakthrough on characterizing the $n$-block capacity of the feedback Gaussian channel. In $2010$, Kim \cite{Kim10} provided a characterization (in the form of an infinite dimensional optimization problem) of the capacity of the ARMA(k) feedback Gaussian channel based on Cover-Pombra's $n$-block capacity characterization. Unfortunately, except for the first-order ARMA noise, it is non-trivial to compute the capacity by solving this infinite dimensional optimization. Recently, Gattami \cite{Ather_feedback_capacity} showed that the capacity of the stationary Gaussian noise channel with finite memory can be found by solving a semi-definite programming problem. In addition, Li and Elia \cite{Li2015_control} \cite{Chong_Youla} utilized the control-theoretic tools to compute the capacity of the ARMA(k) feedback Gaussian channel and construct the capacity-achieving feedback codes.

As a natural extension of the above results, the ARMA(k) feedback Gaussian channel with an eavesdropper that has noisy access of channel transmissions is of much interest. Concretely, two fundamental questions can be asked: 1). would the feedback capacity of such a channel decrease subject to the secrecy constraint?  2). what would be the secrecy capacity-achieving codes? In fact, secure communications over feedback channels have attracted a lot of attention over the last decade. Substantial progresses have been made towards understanding this type of channels. In particular, although the feedback may not increase the capacity of open-loop AWGN channels, \cite{Maurer93, Ahlswede93, Ahlswede2006, poor_07_noisy_fb, Lai08, Ekrem2008, Gunduz2008, Ardestanizadeh2009,Dai_15_entropy, dai_noiseless_feedback} showed that feedback can increase the secrecy capacity by sharing a secret key between legitimate users or roiling the codewords from eavesdroppers' observations. For instance, \cite{Maurer93} and \cite{Ahlswede93} showed the achievement of a positive secrecy rate by using noiseless feedback even when the secrecy capacity of the feed-forward channel is zero. Furthermore, \cite{Bassi2015} presented an achievable scheme for the wiretap channel with generalized feedback, which is a generalization and unification of several relevant previous results in the literature. Very recently, \cite{dai_17_fb_coding} proposed an improved feedback coding scheme for the wiretap channel with noiseless feedback, which was shown to outperform the existing ones in the literature.

On the other hand, the multiple-access (MA) wiretap channel with feedback has also been studied in recent a few years. \cite{poor_07_ma} derived achievable secrecy rate regions for discrete memoryless Gaussian channels where two trusted users send independent confidential messages to an intended receiver in the presence of a passive eavesdropper. \cite{dai_17} developed inner and outer bounds on the secrecy rate region for the MA wiretap channel with noiseless feedback. In a more general setting where users have multiple-input multiple-output (MIMO) equipments, \cite{poor_09} characterized the pre-log factor of the secrecy rate in the case with the number of antennas at the source being no larger than that at the eavesdropper. \cite{Yang2013} and \cite{Tandon2014} investigated the benefits of state-feedback to increase the secrecy degrees of freedom for the two-user Gaussian MIMO wiretap channel.

However, it is noteworthy that most of the aforementioned results on both the point-to-point and the MA communications only considered \textit{memoryless} wiretap channels. Thus motivated, in this paper we consider a \textit{memory} wiretap channel (i.e. ARMA(k) Gaussian noise channel) with feedback and make two major contributions.
\begin{enumerate}
\item  We propose a variant of the generalized $S$-$K$ scheme, which achieves the feedback capacity of the ARMA(k) Gaussian noise channel without an eavesdropper and show that the proposed coding scheme also achieves the same maximum rate of such a channel with an eavesdropper.
\item We further study the AWGN channel with quantized feedback, which is a more realistic channel model for the feedback link. We show that the proposed coding scheme provides non-trivial positive secrecy rates and achieves the feedback capacity of the AWGN channel as the amplitude of the quantization noise vanishes to zero.
\end{enumerate}

The rest of the paper is organized as follows. In Section \ref{sec:sys_model} and \ref{sec:prelim}, we introduce the system model and the preliminary results, respectively. Section \ref{sec::main results} presents the main results of our paper. Section \ref{sec:proofs} provides the technical proofs. Finally, in Section \ref{sec:conclusion}, we conclude the paper and outline possible research venues down the road.

{\bf Notation:} Uppercase and the corresponding lowercase letters $(e.g., Y,Z,y,z)$ denote random variables and their realizations, respectively. We use $\log$ to denote the logarithm with base $2$, and $0\log0=0$. We use $\bf{x}'$ to denote the transpose of a vector or matrix $\bf{x}$. The symbol $C^\infty_{[a,b]}$ denotes the set of bounded continuous functions on $[a,b]$.  $\mathcal{RH}_2$ denotes the set of stable and proper rational filters in Hardy space $\mathcal{H}_2$, and $\mathcal{L}_2$ denotes the 2-norm Lebesgue space.

\section{System Model}{\label{sec:sys_model}}
\indent In this section, we present the mathematical system model. First of all, we consider a discrete-time Gaussian channel with noiseless feedback (Fig. 1).
The additive Gaussian channel is modeled as
\begin{equation}\label{model: forward channel}
y(k)=u(k)+w(k), \qquad k=1,2,\cdots,
\end{equation}
where the Gaussian noise $\lbrace w(k) \rbrace_{k=1}^{\infty}$ is assumed to be stationary with power spectrum density $\mathbb{S}_{w}(e^{j\theta})$ for $\forall \theta\in [-\pi,\pi)$. Unless the contrary is explicitly stated, ``stationary'' without specification refers to stationary in wide sense. Moreover, we assume the power spectral density satisfies the \textit{Paley-Wiener} condition
$$\frac{1}{2\pi}\int_{-\pi}^{\pi}|\log \mathbb{S}_{w}(e^{j\theta})|d\theta< \infty.$$

\begin{figure}
\begin{center}
\includegraphics[scale=0.4]{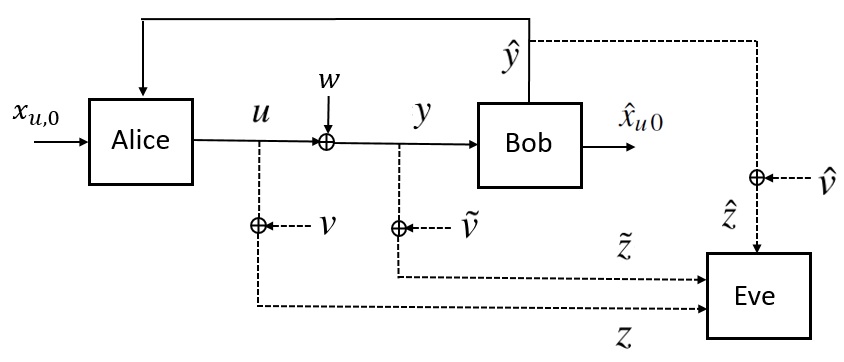}
\caption{ARMA(k) Gaussian wiretap channel with feedback.}
\label{fig:problemModel}
\end{center}
\end{figure}

\begin{assumption} (\textit{ARMA(k) Gaussian Channel})\label{LTIFD.ass}
In this paper, noise $w$ is assumed as the output of a finite-dimensional linear time invariance (LTI) minimum-phase stable system $\mathbb{H} \in \mathcal{RH}_2$, driven by the white Gaussian noise with zero mean and unit variance.
The power spectral density (PSD) of $w$ is colored (nonwhite)\footnote{Both the (Shannon) capacity \cite{shannon56} and the secrecy capacity  \cite{Gunduz2008} for the AWGN channel with noiseless causal feedback have been known.}, bounded away from zero and has a canonical spectral factorization given by  ${\mathbb S}_w(e^{j\theta})= |\mathbb{H}(e^{j\theta})|^2$.
\end{assumption}
As shown in Fig.\ref{fig:problemModel}, the feedback wiretap channel of interest includes a forward channel from Alice to Bob as described by (\ref{model: forward channel}), a causal noiseless feedback $\hat{y}$ from Bob to Alice, and three noisy observation channels to the eavesdropper Eve. Note that a classical wiretap channel model can be recovered if the eavesdropper's channel input $u$ and $\hat{y}$ are removed. In this paper, we assume that the eavesdropper is powerful and can access three inputs \footnote{Note that in \cite{Chong17_secrecy} only the access to channel input $u$ and channel output $y$ is considered. Based on the generalized results in this paper, however, the results in \cite{Chong17_secrecy} with three noisy observation channels to the eavesdropper as assumed in this paper still hold.}. The noisy wiretap channels are modeled as
\begin{equation*}
\begin{split}
z(k) =& u(k) + v(k), \\
\tilde{z}(k) =& y(k) + \tilde{v}(k),  \\
\hat{z}(k) =& \hat{y}(k) + \hat{v}(k), \qquad k= 1,2,\cdots. \\
\end{split}
\end{equation*}
The additive noises $v$, $\tilde{v}$  and $\hat{v}$ are assumed to be arbitrarily finite-memory channels, i.e.,
\begin{equation}\label{model: wiretap channel}
\begin{split}
&p(v(k)|v^{k-1}_1) = p(v(k)|v^{k-1}_{k-d}),\qquad k\geq d,\\
&p(\tilde{v}(k)|\tilde{v}^{k-1}_1) = p(\tilde{v}(k)|\tilde{v}^{k-1}_{k-\tilde{d}}), \qquad k\geq \tilde{d},\\
&p(\hat{v}(k)|\hat{v}^{k-1}_1) = p(\hat{v}(k)|\hat{v}^{k-1}_{k-\hat{d}}), \qquad k\geq \hat{d},\\
\end{split}
\end{equation}
where $d$, $\tilde{d}$ and $\hat{d}$ respectively represent the size of finite memories and the notation $v^b_a$ represents a sequence  $\lbrace v_a, v_{a+1},\cdots, v_{b}\rbrace$ in a compact form. In this paper, we assume these noises have strictly positive and bounded variance for all $k$. But they are not necessarily uncorrelated.

We specify a sequence of $(n,2^{nR_s})$ channel codes with an achievable secrecy rate $R_{s}$ as follows. We denote the message index by $x_{u\,0}$, which is uniformly distributed over the set $\lbrace 1,2,3,\cdots,2^{nR_s}\rbrace$. The encoding process $u_i(x_{u\,0},\hat{y}^{i-1})$ at Alice satisfies the average transmit power constraint $P$, where $\hat{y}^{i-1}=\lbrace \hat{y}_0, \hat{y}_1,\cdots, \hat{y}_{i-1}\rbrace$ ($\hat{y}_0 = \emptyset$) for $i = 1,2,\cdots,n,$  and $u_1(x_{u\,0} , \hat{y}^0) = u_1(x_{u\,0})$. Bob decodes the message to be $\hat{x}_{u\,0}$ by following a decoding function $g$ : $y^n\rightarrow \lbrace 1,2,\cdots,2^{nR_s}\rbrace$ with an error probability satisfying
$P_e^{(n)}=\frac{1}{2^{nR_s}}\sum_{x_{u\,0}=1}^{2^{nR_s}} p({x}_{u\,0}\neq g(y^n)|x_{u\,0})\leq \epsilon_n$,
where $\lim_{n\rightarrow\infty}\epsilon_n=0$. Meanwhile, the information received by Eve should asymptotically vanish, i.e., $\lim_{n\rightarrow\infty} \frac{1}{n}I(x_{u\,0};z_1^n, \tilde{z}_1^n, \hat{z}_1^n )=0$. The objective of secure communications is to send $x_{u\,0}$ to Bob at an as high rate $R_s$ as possible. The secrecy capacity $C_{sc}$ is defined as the supremium of all achievable rates $R_s$. Mathematically,
\begin{equation}
\begin{split}
C_{sc} =& \sup_{\text{feasible coding schemes}} R_s\\
\text{s.t.} & \lim_{n\rightarrow\infty} \frac{1}{n}I(x_{u\,0};z_1^n, \tilde{z}_1^n, \hat{z}_1^n )=0,\\
\end{split}
\end{equation}
where the ``feasible coding schemes'' refer to all feedback codes satisfying the secrecy requirements and the power constraint. Note that the feedback capacity (without the secrecy constraint) from Alice to Bob, denoted as $C_{fb}$, can be recovered by removing the secrecy constraint. This implies $C_{sc}\leq C_{fb}$.

\section{Preliminaries of feedback capacity and capacity-achieving coding scheme}{\label{sec:prelim}}
In this section, we review the characterization of the feedback capacity $C_{fb}$ and then propose a variant of the generalized $S$-$K$ scheme, which is a $C_{fb}$-achieving feedback codes without the presence of an eavesdropper. The materials here are useful for us to further investigate the channel model with an eavesdropper.

\subsection{Feedback Capacity $C_{fb}$ Revisited} \label{sec:computation_C_fb}
Firstly, we present the feedback capacity characterization for the Gaussian channel under Assumption \ref{LTIFD.ass}. As proved in \cite{Kim10}, the feedback capacity from Alice to Bob for such a channel with the average power budget $P$ can be characterized by
\begin{equation}
\begin{split}
C_{fb}=&\max_{\mathbb{Q}}\frac{1}{2\pi}\int_{-\pi}^{\pi}\log |1+\mathbb{Q}(e^{i\theta})|d\theta,\\
s.t. \quad  &\frac{1}{2\pi}\int_{-\pi}^{\pi}|\mathbb{Q}(e^{j\theta})|^2\mathbb{S}_w(e^{j\theta})d\theta\leq P,\\
& \mathbb{Q} \in \mathcal{RH}_2\, \text{is strictly causal}.\\
\end{split}
\label{capacity_short01}
\end{equation}
\begin{remark} \label{rem:unit_zero}
Under Assumption  \ref{LTIFD.ass}, the optimal $\mathbb{Q}$ has no zeros on the unit circle (Proposition 5.1 (ii)  in \cite{Kim10}).
\end{remark}

When specified to ARMA(1) channel, the above characterization can be simplified and has a closed-form solution.
\begin{lemma} \label{lem01}(\textit{Theorem 5.3 in \cite{Kim10}})
For the ARMA(1) feedback Gaussian channel with the average channel input power budget $P>0$, the feedback capacity $C_{fb}$ is given by
\begin{equation*}
C_{fb} = - \log x_0,
\end{equation*}
where $x_0$ is the unique positive root of the fourth-order polynomial in $x$
$$ Px^2 = \frac{(1-x^2)(1+\sigma\alpha x)^2}{(1+\sigma\beta x)},$$
and
$$ \sigma = sign(\beta - \alpha).$$
Furthermore, an optimal $\mathbb{Q}$ to achieve $C_{fb}$ is given by
\begin{equation} \label{equ:optimalQ}
\mathbb{Q}(e^{i\theta}) = -P\sigma x_0 \frac{1+\beta\alpha x_0}{1+\alpha \sigma x_0} \frac{ e^{i\theta} (1+\beta e^{i\theta})}{(1+\alpha e^{i\theta})(1-\sigma x_0 e^{i\theta})}.
\end{equation}

\end{lemma}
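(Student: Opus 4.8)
The plan is to specialize the infinite-dimensional program \eqref{capacity_short01} to the ARMA(1) noise, reduce it to a finite-dimensional problem through the rational structure of $\mathbb{S}_w$, and then solve the resulting stationarity conditions explicitly. I write the canonical minimum-phase factor as $\mathbb{H}(z)=(1+\alpha z)/(1+\beta z)$ with $|\alpha|,|\beta|<1$, so that $\mathbb{S}_w(e^{j\theta})=|\mathbb{H}(e^{j\theta})|^2$. The heart of the argument is to show that an optimal strictly causal $\mathbb{Q}\in\mathcal{RH}_2$ may be taken to be first-order rational with poles tied to the noise plus a single free pole, namely of the form in \eqref{equ:optimalQ}: the factor $e^{i\theta}$ enforces strict causality, the pole at $z=-1/\alpha$ is inherited so that the product $\mathbb{Q}\mathbb{H}$ telescopes to $C\,e^{i\theta}/(1-\sigma x_0 e^{i\theta})$, and the pole governed by $x_0$ is the genuine degree of freedom. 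I would establish this reduction from the variational (KKT) conditions of \eqref{capacity_short01}: because only strictly causal variations $\delta\mathbb{Q}$ are admissible, stationarity does not give the pointwise water-filling identity but rather the weaker requirement that the gradient $\tfrac{1}{2(1+\overline{\mathbb{Q}})}-\lambda\,\mathbb{Q}\,\mathbb{S}_w$ be anti-causal, i.e.\ orthogonal to the strictly causal subspace. Solving this Wiener--Hopf type condition for an ARMA(1) spectrum forces exactly the pole--zero pattern of \eqref{equ:optimalQ}, with the gain $C$ expressed through $x_0,\alpha,\beta$. Remark~\ref{rem:unit_zero} (no zeros of the optimal $\mathbb{Q}$ on the unit circle) is what guarantees this rational solution is admissible and that the spectral factorization underlying the Wiener--Hopf step is well posed.

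Having reduced to the one-parameter family, the second step evaluates the objective in closed form. Since $1+\mathbb{Q}$ is rational with no unit-circle zeros (Remark~\ref{rem:unit_zero}), I would apply Jensen's formula in the form $\tfrac{1}{2\pi}\int_{-\pi}^{\pi}\log|R(e^{i\theta})|\,d\theta=\log|c|+\sum_i\log^{+}|z_i|-\sum_j\log^{+}|p_j|$, where $c$, $\{z_i\}$, $\{p_j\}$ are the leading coefficient, zeros and poles of $R=1+\mathbb{Q}$ and $\log^{+}|\cdot|=\max(0,\log|\cdot|)$. Substituting the optimal parameters determined in the first step, the gain, zero and pole contributions collapse to the single term $-\log x_0$, which is the claimed value of $C_{fb}$.

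The third step pins down $x_0$. The power constraint is active at the optimum (the objective is nondecreasing in the available power), so I impose equality; because $\mathbb{Q}\mathbb{H}=C\,e^{i\theta}/(1-\sigma x_0 e^{i\theta})$ on the circle, the constraint integral evaluates by a standard Poisson-kernel computation to $\tfrac{C^2}{1-x_0^2}=P$. Combining this with the gain relation produced by the Wiener--Hopf condition and eliminating $C$ (and the multiplier $\lambda$) yields, after simplification, the quartic $Px^2=(1-x^2)(1+\sigma\alpha x)^2/(1+\sigma\beta x)$. The sign $\sigma=\mathrm{sign}(\beta-\alpha)$ is then fixed by demanding that the free pole sit on the side of the unit circle that makes the stationary point a maximizer (equivalently, that keeps $\mathbb{Q}$ stable and in $\mathcal{RH}_2$); the two sign options correspond to the cases $\beta>\alpha$ and $\beta<\alpha$.

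Finally I would confirm that the quartic has a unique admissible root $x_0\in(0,1)$: writing $F(x)=(1-x^2)(1+\sigma\alpha x)^2-Px^2(1+\sigma\beta x)$, one has $F(0)=1>0$ and $F(1)=-P(1+\sigma\beta)<0$ (since $|\beta|<1$), so a root exists in $(0,1)$ by the intermediate value theorem, and a sign analysis of $F'$ on $(0,1)$ gives uniqueness. The main obstacle is the structural reduction in the first step: proving that the infinite-dimensional optimizer is genuinely the low-order rational filter \eqref{equ:optimalQ} rather than merely a good feasible candidate. This is precisely where the Wiener--Hopf/spectral-factorization machinery and Remark~\ref{rem:unit_zero} do the real work; once the form of $\mathbb{Q}$ is secured, the remaining steps are residue and Jensen calculus together with elementary root counting.
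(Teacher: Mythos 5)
First, a point of reference: the paper does not prove this lemma at all --- it is quoted (as Theorem 5.3) from \cite{Kim10} --- so there is no in-paper proof to compare against, and your attempt has to stand on its own as a reconstruction of Kim's argument starting from \eqref{capacity_short01}. Your computational skeleton (telescoping $\mathbb{Q}\mathbb{H}$, Jensen's formula for the objective, the Poisson-kernel evaluation of the power integral, the intermediate-value argument for existence of the root) is sound as far as it goes, and your Wiener--Hopf orthogonality condition is correctly formulated. The first genuine gap is the one you only partly acknowledge, and it is not a technicality: the objective $\frac{1}{2\pi}\int\log|1+\mathbb{Q}|\,d\theta$ is \emph{not concave} in $\mathbb{Q}$ (e.g.\ $b\mapsto\tfrac12\log(1+b^2)$ is convex near $b=0$), so although the feasible set is convex, the maximization is non-convex. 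Hence your stationarity/Wiener--Hopf condition identifies critical points only; even if solving it produces the rational filter \eqref{equ:optimalQ}, what you have shown is that this $\mathbb{Q}$ is feasible and attains rate $-\log x_0$, i.e.\ achievability --- which already follows from Proposition \ref{thm_capacity_achieving_code} once such a $\mathbb{Q}$ is exhibited. The actual content of the lemma is the converse, that no admissible $\mathbb{Q}$ does better, and that requires a global-optimality certificate (Kim's duality-based sufficient conditions, or strong-duality machinery of the type in Theorem \ref{strongdual.thm}); your outline invokes neither. Fixing $\sigma$ ``by demanding that the stationary point is a maximizer'' presupposes exactly the missing certificate.

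Second, the elimination you describe does not land on the printed formulas, and carrying out your own steps exposes this. With $\mathbb{H}(z)=(1+\alpha z)/(1+\beta z)$ and $\mathbb{Q}(z)=C\,z(1+\beta z)/[(1+\alpha z)(1-\sigma x_0 z)]$, your power computation gives $C^2/(1-x_0^2)=P$, while Jensen's formula gives rate $-\log x_0$ only if the numerator $N(z)=(1+\alpha z)(1-\sigma x_0 z)+Cz(1+\beta z)$ of $1+\mathbb{Q}$ vanishes at $z=\sigma x_0$, which forces $C=-\frac{(1-x_0^2)(1+\sigma\alpha x_0)}{\sigma x_0(1+\sigma\beta x_0)}$. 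Eliminating $C$ between these two relations yields
\begin{equation*}
Px^2=\frac{(1-x^2)(1+\sigma\alpha x)^2}{(1+\sigma\beta x)^2},
\qquad
C=-P\sigma x_0\,\frac{1+\sigma\beta x_0}{1+\sigma\alpha x_0},
\end{equation*}
i.e.\ the denominator of the quartic comes out \emph{squared} and the gain factor is $\frac{1+\sigma\beta x_0}{1+\sigma\alpha x_0}$, not $\frac{1+\beta\alpha x_0}{1+\alpha\sigma x_0}$. Indeed, the quartic and the filter as printed in the lemma are mutually inconsistent: substituting the printed gain into the power integral gives $P(1+\alpha\beta x_0)^2/(1+\sigma\beta x_0)$ rather than $P$, so the printed $\mathbb{Q}$ does not meet the power constraint with equality at the printed root (this points to transcription errors in the lemma relative to \cite{Kim10}). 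Your claim that the algebra ``yields, after simplification, the quartic'' as stated is therefore unverified and, as written, false; a correct write-up must either derive the mutually consistent pair above or explicitly reconcile the statement with Kim's original.
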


Since the optimization in (\ref{capacity_short01}) has infinite dimensional search space, except for the ARMA(1) Gaussian channels, neither the analytical nor the numerical solutions to $\mathbb{Q}(e^{j\theta})$ was known in the literature. One recent result in \cite{Ather_feedback_capacity} casted the above optimization into a semi-definite programming and then used the convex tools to compute $C_{fb}$. In addition, \cite{Li2015_control} \cite{Chong_Youla} provided a numerical approach to compute $C_{fb}$ and explicitly construct the optimal $\mathbb{Q}(e^{j\theta})$, which can be efficiently solved by the standard convex optimization tools. We refer the interested reader to \cite{Li2015_control} \cite{Chong_Youla} for details. To make our paper self-contained, we re-state the main results in \cite{Li2015_control} \cite{Chong_Youla} as follows.


\begin{proposition} (\textit{Lemma 4, \cite{Chong_Youla}})
Consider a non-white additive noise $w$ under Assumption \ref{LTIFD.ass} in the forward channel and let $\mathbb{Q}(e^{j\theta}) = a(\theta)+jb(\theta)$. The $h$-upper-bound on the feedback capacity $C_{fb}$, denoted by $C_{fb}(h)$, can be characterized by
\begin{equation}
\begin{split}
C_{fb}(h)=&\max_{\Gamma}\frac{1}{4\pi}\int_{-\pi}^{\pi}\log ((1+a(\theta))^2+ b(\theta)^2 )d\theta\\
s.t. \quad &\frac{1}{2\pi}\int_{-\pi}^{\pi}\left(a^2(\theta)+b^2(\theta)\right) S_w(\theta)d\theta\leq P,\\
&\int_{-\pi}^{\pi} a(\theta)\cos(n\theta) d\theta + \int_{-\pi}^{\pi} b(\theta) \sin(n\theta)d\theta = 0 \qquad \text{(strict causality constraint)}\\
& \quad n = 0, 1,2,\cdots,h, \\
\end{split}
\label{formula_stationaryGuassian_upperbound_equi}
\end{equation}
where the maximum is taken over a functional set $\Gamma$ defined as
\begin{equation}
\begin{split}
\Gamma =& \lbrace a(\theta), b(\theta): [-\pi, \pi] \rightarrow \mathbb{R} \quad | \quad a(\theta), b(\theta)\in \mathcal{L}_2\rbrace.\\
\end{split}
\label{def_set_Gamma}
\end{equation}
\label{lemma:symmetric_filter}
Furthermore, we have $C_{fb}(h) \geq C_{fb}(h+1), C_{fb}(h) \geq C_{fb}$ for any $h\geq 0$, and
$$C_{fb} = \lim_{h\rightarrow \infty} C_{fb}(h).$$

\end{proposition}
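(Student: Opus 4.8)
The plan is to read \eqref{formula_stationaryGuassian_upperbound_equi} as a finite-dimensional relaxation of the infinite-dimensional program \eqref{capacity_short01} and to establish the three assertions in turn: the relaxation bound $C_{fb}(h)\ge C_{fb}$, the monotonicity $C_{fb}(h)\ge C_{fb}(h+1)$, and the convergence $C_{fb}=\lim_{h\to\infty}C_{fb}(h)$. The first two are soft; the content is in the last one.

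First I would prove $C_{fb}(h)\ge C_{fb}$ by exhibiting, for every filter $\mathbb{Q}$ feasible in \eqref{capacity_short01}, a feasible point of \eqref{formula_stationaryGuassian_upperbound_equi} with the same cost. Writing $\mathbb{Q}(e^{j\theta})=a(\theta)+jb(\theta)$ with $a=\mathrm{Re}\,\mathbb{Q}$ and $b=\mathrm{Im}\,\mathbb{Q}$ (both in $\mathcal{L}_2$), the identity $|1+\mathbb{Q}|^2=(1+a)^2+b^2$ matches the objectives since $\log|1+\mathbb{Q}|=\tfrac12\log((1+a)^2+b^2)$, and $|\mathbb{Q}|^2=a^2+b^2$ matches the power constraints. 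Strict causality of $\mathbb{Q}$ is equivalent to the vanishing of the coefficients $\tfrac{1}{2\pi}\int_{-\pi}^{\pi}\mathbb{Q}(e^{j\theta})e^{-jn\theta}\,d\theta$ for all $n\ge 0$; taking real parts yields exactly $\int a\cos(n\theta)\,d\theta+\int b\sin(n\theta)\,d\theta=0$ for every $n\ge 0$, in particular for $n=0,\dots,h$. Hence $(a,b)\in\Gamma$ is feasible and attains the same value, so the maximum in \eqref{formula_stationaryGuassian_upperbound_equi} is at least $C_{fb}$. Monotonicity is then immediate: the program for $C_{fb}(h+1)$ carries one additional equality constraint (the $n=h+1$ condition) relative to that for $C_{fb}(h)$, so its feasible set is smaller and the maximum cannot increase.

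For the convergence I would argue as follows. Since $\{C_{fb}(h)\}$ is non-increasing and bounded below by $C_{fb}$, the limit $L:=\lim_{h}C_{fb}(h)\ge C_{fb}$ exists, and it remains to show $L\le C_{fb}$. Introduce the semi-infinite relaxation $C_{fb}(\infty)$ obtained by imposing the causality constraint for all $n\ge 0$. Because $S_w$ is bounded away from zero (Assumption \ref{LTIFD.ass}), the power budget makes the feasible set bounded in $\mathcal{L}_2$; taking near-optimal pairs $(a_h,b_h)$ and extracting a weakly convergent subsequence $(a_h,b_h)\rightharpoonup(a_\infty,b_\infty)$, each causality functional is continuous and linear, so every constraint passes to the limit and $(a_\infty,b_\infty)$ is feasible for $C_{fb}(\infty)$. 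Provided the cost is weakly upper semicontinuous this gives $C_{fb}(\infty)\ge L$, while the relaxation ordering gives $C_{fb}(\infty)\le C_{fb}(h)$ for all $h$ and hence $C_{fb}(\infty)\le L$; thus $\lim_h C_{fb}(h)=C_{fb}(\infty)$. Finally I would close the gap $C_{fb}(\infty)=C_{fb}$ by symmetrization: since $S_w(\theta)=S_w(-\theta)$, the involution $(a(\theta),b(\theta))\mapsto(a(-\theta),-b(-\theta))$ preserves the objective, the power, and all causality constraints, so one aims to restrict attention to even-$a$/odd-$b$ pairs; for such pairs $\int a\sin(n\theta)\,d\theta=\int b\cos(n\theta)\,d\theta=0$ automatically, so the \emph{imaginary} parts of the causality conditions hold for free and the real-part conditions for all $n\ge 0$ become equivalent to genuine strict causality of $a+jb$. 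Symmetric feasible points of $C_{fb}(\infty)$ are then precisely the real and imaginary parts of strictly causal filters feasible for \eqref{capacity_short01}, giving $C_{fb}(\infty)=C_{fb}$ and therefore $L=C_{fb}$.

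The main obstacle is exactly this limiting/tightening step. The objective $\log((1+a)^2+b^2)$ is neither convex nor concave in $(a,b)$ (for instance $\log(1+b^2)$ reverses the sign of its second derivative at $|b|=1$), so weak $\mathcal{L}_2$ convergence of $(a_h,b_h)$ does not by itself transfer the value of this nonlinear functional: weak upper semicontinuity of the cost is not automatic, and the same non-concavity blocks the naive ``restrict to symmetric pairs'' reduction (one cannot simply average a feasible point with its reflection without possibly lowering the objective). Making both $C_{fb}(\infty)\ge L$ and the symmetric reduction rigorous therefore likely requires exploiting the finite-dimensional rational (Youla-type) structure guaranteed by Assumption \ref{LTIFD.ass} and Remark \ref{rem:unit_zero}, or passing through Lagrangian duality with the finitely many multipliers of \eqref{formula_stationaryGuassian_upperbound_equi}, rather than raw compactness; the same analyticity/minimum-phase structure is also what is needed to ensure that the symmetrized limit corresponds to a bona fide element of $\mathcal{RH}_2$ rather than merely an $\mathcal{L}_2$ pair.
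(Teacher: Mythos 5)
Your treatment of the two inequalities is correct, and it is in fact all the justification this paper itself contains: the proposition is imported verbatim as Lemma~4 of \cite{Chong_Youla}, and the paper's only ``proof'' is the remark immediately following it, which sketches exactly your first two steps --- strict causality of $\mathbb{Q}$ is an infinite family of linear constraints on Fourier coefficients, and keeping only the constraints indexed $n=0,1,\dots,h$ enlarges the feasible set, whence $C_{fb}(h)\geq C_{fb}(h+1)$ and $C_{fb}(h)\geq C_{fb}$. Your feasibility map $\mathbb{Q}\mapsto(\mathrm{Re}\,\mathbb{Q},\mathrm{Im}\,\mathbb{Q})$, the matching of objectives via $\log|1+\mathbb{Q}|=\tfrac12\log((1+a)^2+b^2)$, and the observation that for a real rational filter the imaginary parts of the causality conditions hold automatically (so only the real-part constraints need be imposed in \eqref{formula_stationaryGuassian_upperbound_equi}) are precisely that argument made rigorous.

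The genuine gap is the convergence claim $C_{fb}=\lim_{h\to\infty}C_{fb}(h)$, which is the only substantive assertion in the proposition, and your proposal does not establish it --- as you yourself concede. The failure is not cosmetic: for integral functionals on $\mathcal{L}_2$, weak upper semicontinuity is essentially equivalent to concavity of the integrand, and $(a,b)\mapsto\log((1+a)^2+b^2)$ is not concave, so the value along your near-optimal sequence $(a_h,b_h)$ need not pass to the weak limit; the same non-concavity blocks the symmetrization (one cannot average a feasible point with its reflection without possibly decreasing the objective). There is also a second, independent hole at the very end: even a conjugate-symmetric feasible point of your $C_{fb}(\infty)$ only produces a strictly causal element of $\mathcal{H}_2$, whereas \eqref{capacity_short01} maximizes over $\mathcal{RH}_2$, so closing the identification $C_{fb}(\infty)=C_{fb}$ additionally requires a density-plus-continuity argument that your write-up does not supply. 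The paper sidesteps all of this by citation to \cite{Chong_Youla}; the Lagrangian-dual machinery it reproduces as Theorem~\ref{strongdual.thm} (finitely many multipliers $\lambda,\eta,\eta_0$ and the explicit characterization of the optimizer) is exactly the kind of structure your final paragraph correctly suspects is needed, but suspecting it is not proving it --- so the proposal should be graded as correct on the two easy inequalities and incomplete on the limit.
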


\begin{remark}
With a bit abuse of notation, we use $S_w(\theta)$ instead of $\mathbb{S}_w(e^{i\theta})$ for simplicity. The basic idea to obtain the characterization (\ref{formula_stationaryGuassian_upperbound_equi}) is that the strict causality constraint (\ref{capacity_short01}) can be equivalently imposed on (the infinite number of) the non-positive index coefficients of the inverse Fourier transform of $\mathbb{Q}(e^{i\theta})$ by setting them to zeros. The upper bound follows as we herein only impose a finite number of the Fourier coefficients with non-positive index $-h, -h+1, \cdots, -1,0$ to be zero.
\end{remark}

Notice that $C_{fb}(h)$ still turns out to be a semi-infinite dimensional problem, in which there are finite number of constraints but infinite number of feasible solutions in $\mathcal{L}_2$. However, one important result, as shown below, establishes the strong Lagrangian dual of $C_{fb}(h)$. In other words, there is no duality gap between the semi-infinite dimensional primal problem and its finite dimensional dual problem, a fact that provides a convex optimization approach to compute $C_{fb}(h)$.
\begin{theorem}(\textit{Theorem 2 in \cite{Li2015_control}})\label{strongdual.thm}
Under Assumption \ref{LTIFD.ass}, let
$$
\begin{array}{l}
A(\theta) =[\cos(\theta),\cos(2\theta), \cdots,\cos(h\theta)]',\\
B(\theta) =[\sin(\theta),\sin(2\theta), \cdots,\sin(h\theta)]'.
\end{array}
$$
For $\lambda > 0$, $\eta\in \mathbb{R}^h$, and $\eta_0\in \mathbb{R}$, define
$$
r^2(\theta) =(2\lambda S_{w}(\theta)+\eta'A(\theta)+\eta_0)^2+(\eta'B(\theta))^2.
$$
Then, the following statements hold.\\
a) The Lagrangian dual of (\ref{formula_stationaryGuassian_upperbound_equi}) is given by
\begin{equation}\label{dual.eq}
(D): \mu_h = - \max_{\lambda > 0,\eta\in \mathbb{R}^{h},\eta_0\in \mathbb{R}}g(\lambda,\eta,\eta_0),
\end{equation}
where
\begin{equation}\label{thm.dual.eq}
\begin{split}
&g(\lambda,\eta,\eta_0) = \displaystyle\frac{1}{2\pi}\int_{-\pi}^\pi \left[\frac{1}{2}\log(2\lambda S_{w}(\theta)-\nu(\theta))-\frac{r^2(\theta)}{2\nu(\theta)}+\lambda S_{w}(\theta)\right]d\theta -\lambda P+\eta_0+\frac{1}{2},\\
\end{split}
\end{equation}
with
\begin{equation}\label{optimal.nu}
\nu(\theta)=\frac{-r^2(\theta)+\sqrt{r^4(\theta)+8\lambda S_{w}(\theta)r^2(\theta)}}{2}.
\end{equation}
\\
b) The dual problem (D) in (\ref{dual.eq}) is equivalent to the following convex optimization problem
\begin{equation}\label{dual2.eq}
\mu_h = - \max_{\begin{array}{l}\lambda > 0,\eta\in \mathbb{R}^{h},\eta_0\in \mathbb{R}\\ \nu(\theta)\geq 0 \in C^\infty_{[-\pi,\pi]}\end{array}}
\tilde{g}(\lambda,\eta,\eta_0,\nu(\theta)),
\end{equation}
where
\begin{equation}\label{thm.dual2.eq}
\begin{split}
&\tilde{g}(\lambda,\eta,\eta_0,\nu(\theta)) = \displaystyle\frac{1}{2\pi}\int_{-\pi}^\pi \left[\frac{1}{2}\log(2\lambda S_{w}(\theta)-\nu(\theta))-\frac{r^2(\theta)}{2\nu(\theta)}+\lambda S_{w}(\theta)\right]d\theta -\lambda P+\eta_0+\frac{1}{2},\\
\end{split}
\end{equation}
and the optimal $\nu(\theta)$ is characterized by (\ref{optimal.nu}).\\
c) Furthermore, $C_{fb}(h)=\mu_h$, and an optimal strictly casual filter $\mathbb{Q}_h(e^{j\theta}) = a(\theta)+jb(\theta)$ for $C_{fb}(h)$ exists and is characterized by
\begin{equation}\label{thm_xy_dual}
\begin{split}
a(\theta)=&\frac{2\lambda S_{w}(\theta)+\eta'A(\theta)+\eta_0}{\nu(\theta)}-1 \quad a.e.,\\
b(\theta)=&\frac{\eta'B(\theta)}{\nu(\theta)}  \quad a.e.,\\
\end{split}
\end{equation}
\label{lemma.strong.dual}
where $(\lambda, \eta', \eta_0, \nu(\theta))$ are obtained by solving (\ref{dual.eq}) (or (\ref{dual2.eq})).
\end{theorem}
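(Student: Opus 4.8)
The plan is to treat (\ref{formula_stationaryGuassian_upperbound_equi}) as a semi-infinite optimization problem and carry out the Lagrangian duality program, while staying alert to the fact that the primal objective, which I will call $f(a,b)=\frac{1}{4\pi}\int_{-\pi}^{\pi}\log((1+a)^2+b^2)\,d\theta$, is \emph{not} concave in $(a,b)$; this is exactly why (a)--(c) cannot be quoted from a textbook theorem and must be assembled by hand. For part (a) I would attach a multiplier $\lambda>0$ to the power inequality and multipliers $\eta_0,\eta_1,\dots,\eta_h$ to the $h+1$ causality equalities and form the Lagrangian $L$. The decisive structural observation is that, for fixed multipliers, $L$ is an integral over $\theta$ whose integrand involves only the pointwise pair $(a(\theta),b(\theta))$, so the dual function $g=\sup_{a,b}L$ is computed by an independent two-variable maximization at each $\theta$. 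Since $\lambda>0$ and $S_w>0$, this pointwise objective is coercive (it diverges to $-\infty$ both as $\|(a,b)\|\to\infty$ and as $(a,b)\to(-1,0)$, where the logarithm blows up), so a finite maximizer exists and is pinned down by the two first-order conditions. Introducing the auxiliary variable $\nu(\theta)$ lets me write that maximizer compactly as (\ref{thm_xy_dual}); eliminating it through the stationarity relation produces (\ref{optimal.nu}), and substituting (\ref{thm_xy_dual}) back into $L$ and simplifying yields the closed form (\ref{thm.dual.eq}).

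For part (b) I would run the same computation in reverse: promote $\nu(\theta)\ge 0$ to a free variable, define $\tilde g$ by the identical integrand, and verify that maximizing $\tilde g$ over $\nu$ recovers $g$ with the optimal $\nu$ given by (\ref{optimal.nu}). This is nothing but the variational representation of the concave map $t\mapsto \frac12\log t$, whose purpose is to linearize the offending logarithm. The payoff is \emph{joint} concavity of $\tilde g$ in $(\lambda,\eta,\eta_0,\nu)$: the term $\frac12\log(2\lambda S_w-\nu)$ is the logarithm of an affine function (with $2\lambda S_w-\nu>0$ on the feasible region), $-\frac{r^2}{2\nu}$ is concave because $r^2$ is a sum of squares of affine functions of $(\lambda,\eta,\eta_0)$ and $(q,\nu)\mapsto q^2/\nu$ is jointly convex for $\nu>0$, and the remaining terms are affine. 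Hence (\ref{dual2.eq}) is a genuine convex program equivalent to (\ref{dual.eq}).

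The crux is the no-gap identity $C_{fb}(h)=\mu_h$ in part (c), which I would establish by an explicit primal--dual verification rather than by appealing to concavity of the primal. Weak duality holds unconditionally and gives $C_{fb}(h)\le\mu_h$, so it suffices to exhibit a primal-feasible pair whose objective equals $\mu_h$. Solving the convex dual (\ref{dual2.eq}) yields an optimal $(\lambda^\star,\eta^\star,\eta_0^\star,\nu^\star)$ with $\lambda^\star>0$; I define $(a^\star,b^\star)$ by (\ref{thm_xy_dual}) and use the envelope relations $\partial g/\partial\eta_n=0$ and $\partial g/\partial\lambda=0$ to read off that $(a^\star,b^\star)$ satisfies all $h+1$ equality constraints and meets the power budget with equality (complementary slackness). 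Feasibility then annihilates the multiplier terms in $L$, so $L(a^\star,b^\star;\lambda^\star,\eta^\star,\eta_0^\star)$ equals both $f(a^\star,b^\star)$ and the dual value $\mu_h$; the sandwich $f(a^\star,b^\star)\le C_{fb}(h)\le\mu_h=f(a^\star,b^\star)$ then forces equality and simultaneously certifies (\ref{thm_xy_dual}) as the optimal filter. The step I expect to demand the most care is the pointwise maximization underlying $g$: because the two-variable objective is non-concave, I must confirm that the stationary point obtained from the first-order conditions is the \emph{global} pointwise maximizer — this is precisely where the coercivity/blow-up argument is indispensable — since only then does the constructed $(a^\star,b^\star)$ actually attain $\sup_{a,b}L$, which is what closes the duality gap.
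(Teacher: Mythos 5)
First, a point of comparison that matters for this exercise: the paper itself contains \emph{no} proof of this theorem. It is re-stated verbatim from Theorem 2 of \cite{Li2015_control} (``to make our paper self-contained, we re-state the main results''), so there is no in-paper argument to measure your proposal against; I can only assess it on its own merits. Your architecture --- (a) compute the dual function by pointwise maximization over $(a(\theta),b(\theta))$, (b) linearize the logarithm via a variational representation in $\nu$ to obtain the jointly concave $\tilde g$ (your concavity bookkeeping, perspective-convexity of $r^2/\nu$ and log of an affine map, is correct), (c) close the gap for the non-concave primal by weak duality plus explicit primal--dual verification --- is the natural route and structurally sound. But two steps are genuinely incomplete as written.

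First, in (a) the first-order conditions do \emph{not} ``pin down'' the maximizer. Eliminating $(a,b)$ from the stationarity system yields the quadratic $\nu^2+r^2\nu-2\lambda S_w(\theta)r^2=0$, which has two real roots $\nu_{\pm}=\frac{1}{2}\bigl(-r^2\pm\sqrt{r^4+8\lambda S_w r^2}\bigr)$, one positive and one negative, and \emph{both} correspond to bona fide stationary points with $(1+a)^2+b^2>0$. Coercivity guarantees only that the global maximizer is among the stationary points; it cannot select the branch. You need a value comparison: writing the stationary value as a function of $D=2\lambda S_w-\nu$, one checks it equals $-\tfrac12\log D+\lambda S_w/D+\text{const}$, which is strictly decreasing in $D$, so the $\nu_+$ branch (smaller $D$), i.e.\ (\ref{optimal.nu}), is the global maximum. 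Without this comparison the closed form (\ref{thm.dual.eq}), and everything downstream of it, is unjustified. (Note that in (b) the branch ambiguity disappears because the constraint $\nu\geq 0$ excludes $\nu_-$; but the equivalence of (b) with (a) is only meaningful once (a) is settled.)

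Second, part (c) begins by assuming dual attainment: ``solving the convex dual yields an optimal $(\lambda^\star,\eta^\star,\eta_0^\star,\nu^\star)$ with $\lambda^\star>0$.'' That existence claim is itself a necessary piece of the proof: you must show the dual objective is coercive on its domain ($g\to-\infty$ as $\lambda\to\infty$ via $-\lambda P$ and as $\|\eta\|,|\eta_0|\to\infty$ via $-r^2/(2\nu)$ with $0\le\nu<2\lambda S_w$), and you must rule out the boundary $\lambda\to 0^+$, where the behavior is delicate because $\nu$ is squeezed to $0$ while $r^2$ may also vanish. Relatedly, your envelope/Danskin step ($\partial g/\partial\eta_n=0$ implies feasibility of $(a^\star,b^\star)$, $\partial g/\partial\lambda=0$ implies the power constraint is tight) requires $g$ to be differentiable at the dual optimum, i.e.\ almost-everywhere uniqueness of the pointwise maximizer; uniqueness fails exactly where $r^2(\theta)=0$, which is also where (\ref{thm_xy_dual}) divides by zero, so you must argue this set has measure zero --- this is where Assumption \ref{LTIFD.ass} (rational $S_w$ bounded away from zero) and Remark \ref{rem:unit_zero} do real work. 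These gaps are fillable, but as written parts (a) and (c) are outlines rather than proofs.
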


Note that the above optimization is not easily computable due to the integral in the objective function. A natural approach is to approximate the integral with a finite sum by discretizing $\theta$. We apply such discretization to (\ref{dual2.eq}) (with spacing $\frac{\pi}{m}$) and introduce the following finite dimensional convex problem. Given $m$,  consider
\begin{equation}{\label{opt_upperbound_approximate}}
\mu_h(m)=-\max_{\lambda > 0,\eta,\eta_0, \nu_i\geq 0}\tilde{g}_m(\lambda,\eta,\eta_0,\nu_i)
\end{equation}
where
\begin{equation*}
\begin{split}
&\tilde{g}_m(\lambda,\eta,\eta_0,\nu_i) =  \frac{1}{2m}\sum_{i=1}^{2m}\left(\frac{1}{2}\log(2\lambda S_{w}(\theta_i)-\nu_i)+\lambda S_{w}(\theta_i)-\frac{r^2(\theta_i)}{2\nu_i} -\lambda P+\eta_0+\frac{1}{2}\right),\\
\end{split}
\end{equation*}
and $\theta_i = -\pi+ \frac{\pi}{m}(i-1)$. \\
It is argued in \cite{Chong_Youla} that $\mu_h(m)$ can be an arbitrarily well approximation of $\mu_h$, i.e., $\lim_{m\rightarrow \infty} \mu_h(m) = \mu_h$.
Notice that the optimization (\ref{opt_upperbound_approximate}) is in a simple convex form. In particular, the $\log$ of an affine function is concave. The term $\frac{r^2(\theta_i)}{\nu_i}$ is quadratic (composed with an affine function of the variables)  over linear function, therefore convex.  Thus,  (\ref{opt_upperbound_approximate}) can be efficiently  solved  with the standard convex optimization tools, e.g. CVX, a package for specifying and solving convex programs\cite{cvx01, cvx02}.

In what follows, we describe, given an optimal $\mathbb{Q}$ in (\ref{capacity_short01}), how to construct an implementable coding scheme that achieves the feedback capacity from Alice to Bob.

\begin{figure*}
\begin{center}
\includegraphics[scale=0.45]{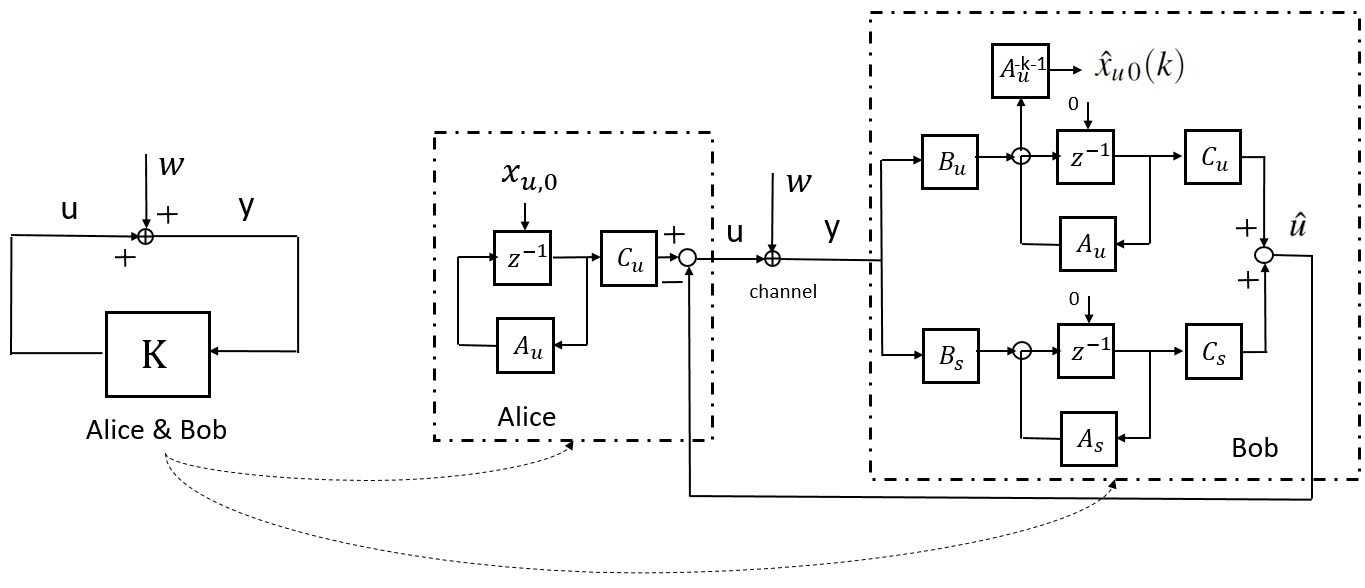}
\caption{Representation I: Decomposition of filter $\mathbb{K}$ into the feedback encoder (Alice) and decoder (Bob). The eavesdropper channels are not included. $z$-transform is used to represent the dynamics of LTI systems.}
\label{fig:codingStructure_elia}
\end{center}
\end{figure*}
%


\subsection{$C_{fb}$-achieving Feedback Coding Scheme}\label{sec: feedback capacity}
First of all, once an optimal $\mathbb{Q}$ is found for the above optimization, we construct a feedback filter $\mathbb{K}=-\mathbb{Q}(1+\mathbb{Q})^{-1}$ stabilizing the channel within the prescribed input average power budget (see \cite{Li2015_control} for proofs). Next, based on the transfer function $\mathbb{K}$, we construct an explicit feedback coding scheme as follows, which is deterministic (time-invariant) and has doubly exponential decaying decoding error probability.

We first present controller $\mathbb{K}$ as a LTI single-input-single-output (SISO) finite-dimensional discrete-time unstable system with the following state-space model:
\begin{equation}
\begin{split}
\mathbb{K}: \qquad \begin{bmatrix} x_s(k+1) \\ x_u(k+1)\end{bmatrix} &= \begin{bmatrix} A_s & 0 \\ 0 & A_u\end{bmatrix} \begin{bmatrix} x_s(k) \\ x_u(k)\end{bmatrix} + \begin{bmatrix} B_s \\ B_u\end{bmatrix}y(k)\\
u(k) &= \begin{bmatrix} C_s & C_u\end{bmatrix} \begin{bmatrix} x_s(k) \\ x_u(k)\end{bmatrix}.\\
\end{split}
\label{codingScheme_SS}
\end{equation}
Based on Remark \ref{rem:unit_zero}, we assume that the eigenvalues of $A_u$ are strictly outside the unit disc while the eigenvalues of $A_s$ are strictly inside the unit disc. Without loss of generality, we assume that $A_s$ and $A_u$ are in Jordan form. Assume $A_u$ has $m$ eigenvalues, denoted by $\lambda_i(A_u), i = 1,2,\cdots, m$. Next, we review the coding scheme in \cite{Chong_Youla} \cite{Elia2004} (Fig. \ref{fig:codingStructure_elia}), which decompose ${\mathbb K}$ into an encoder (Alice) and a decoder (Bob) with an estimated signal fed back to the encoder via the noiseless feedback channel.

{\textit {{\bf Representation I}: Decoder-estimation-based Feedback Coding Scheme}}\cite{Chong_Youla} \cite{Elia2004} (Fig. \ref{fig:codingStructure_elia}).\\

{\bf{Decoder:}} The decoder runs ${\mathbb K}$ driven by the channel output $y$.
$$
\begin{array}{ccl}
x_s(k+1)&=&A_sx_s(k)+B_sy(k),\; x_{s}(0)=0,\\
\hat{x}_u(k+1)&=&A_u\hat{x}_u(k)+B_uy(k),\; \hat{x}_{u}(0)=0.
\end{array}
$$
It produces two signals:
an estimate of the initial condition of the encoder
$$
\hat{x}_{u\,0}(k)=A_u^{-k-1}\hat{x}_u(k+1),
$$
and a feedback signal
$$
\hat{u}(k) = \begin{bmatrix} C_s & C_u\end{bmatrix} \begin{bmatrix} x_s(k) \\ \hat{x}_u(k)\end{bmatrix}.\\
$$

{\bf Encoder:} The encoder runs the following dynamics
$$
\begin{array}{rcl}
\tilde{x}_u(k+1)&=&A_u\tilde{x}_u(k),\;\tilde{x}_u(0)=x_{u\,0},\\
\tilde{u}_u(k)&=&C_u\tilde{x}_u(k).
\end{array}
$$
It receives $\hat{u}$ and produces the channel input
$$
u(k)=\tilde{u}_u(k)-\hat{u}(k).
$$

As will be seen later, the above coding scheme cannot be directly applied to the feedback ARMA(k) Gaussian channel with an eavesdropper. Therefore, we next propose an equivalent representation of the above decoder-estimation-based feedback coding scheme. It will be proved that a variant of this new representation achieves the same maximum rate of the same channel without the eavesdropper.\\

{\textit {{\bf Representation II}: Channel-output-based Feedback Coding Scheme (Fig. \ref{fig:codingStructure_est_feedback})}}.\\

{\bf{Decoder:}} The decoder runs ${\mathbb K}$ driven by the channel output $y$.
$$
\begin{array}{ccl}
\hat{x}_u(k+1)&=&A_u\hat{x}_u(k)+B_uy(k),\; \hat{x}_{u}(0)=0.
\end{array}
$$

It only produces an estimate of the initial condition of the encoder
$$
\hat{x}_{u\,0}(k)=A_u^{-k-1}\hat{x}_u(k+1).
$$

{\bf Encoder:} The encoder runs the following dynamics driven by the initial state, i.e., the message,
$$
\begin{array}{rcl}
\tilde{x}_u(k+1)&=&A_u\tilde{x}_u(k),\;\tilde{x}_u(0)=x_{u\,0},\\
\tilde{u}_u(k)&=&C_u\tilde{x}_u(k).
\end{array}
$$
It receives $y$ and runs dynamics driven by the received feedback $y$,
$$
\begin{array}{ccl}
x_s(k+1)&=&A_sx_s(k)+B_sy(k),\; x_{s}(0)=0\\
\hat{x}_u(k+1)&=&A_u\hat{x}_u(k)+B_uy(k),\; \hat{x}_{u}(0)=0,
\end{array}
$$
and produces a signal
$$
\hat{u}(k) = \begin{bmatrix} C_s & C_u\end{bmatrix} \begin{bmatrix} x_s(k) \\ \hat{x}_u(k)\end{bmatrix}.\\
$$

Then, the encoder produces the channel input
$$
u(k)=\tilde{u}_u(k)-\hat{u}(k).
$$

\begin{figure}
\begin{center}
\includegraphics[scale=0.45]{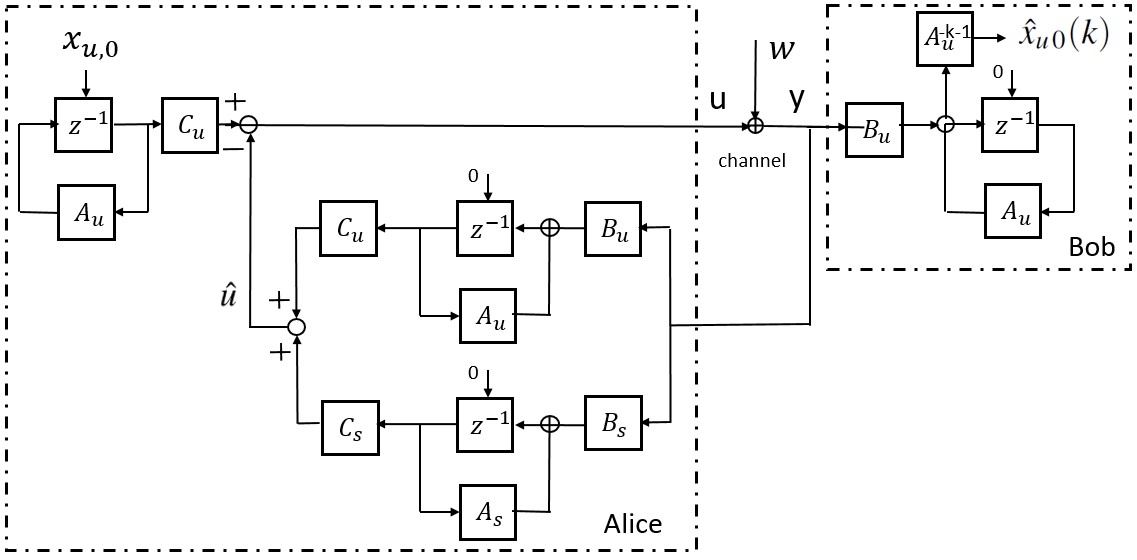}
\caption{Representation II: Decomposition of filter $\mathbb{K}$ into the feedback encoder (Alice) and decoder (Bob). The eavesdropper channels are not included.}
\label{fig:codingStructure_est_feedback}
\end{center}
\end{figure}

By comparing the two representations, we see that the only difference comes from the feedback signal. In \textit{Representation I}, the feedback signal $\hat{u}$ is generated by the decoder, while in \textit{Representation II}, the feedback signal is simply the raw channel output. The equivalence can be directly verified by comparing the channel inputs $u$ (encoder) and the estimate of the message $\hat{x}_{u\,0}$ (decoder) of the two representations.

\begin{proposition}\label{prop:equivalence repr}
For a given message $x_{u\,0}$ and a sequence of additive noise $w_1^k$ ($k\geq 1$), the {\textit{ Representation I}} and {\textit{ Representation II}} of the proposed coding scheme produce identical channel input $u(k)$ and message estimate $\hat{x}_{u\,0}(k)$ for $\forall k$.
\end{proposition}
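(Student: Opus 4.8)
The plan is to prove the equivalence by a direct causal induction on the time index $k$, exploiting the fact that both representations run the \emph{same} state-space recursions for $x_s$, $\hat{x}_u$ and $\tilde{x}_u$ on the \emph{same} driving signals, and differ only in whether $\hat{u}$ is formed at the decoder (Representation I) or re-formed at the encoder (Representation II). I will attach a superscript $\mathrm{I}$ or $\mathrm{II}$ to the homonymous quantities produced under each representation, and show that all internal states, and hence $u(k)$, $y(k)$ and $\hat{x}_{u\,0}(k)$, coincide for every $k$.

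The one structural observation that makes the induction go through is that each recursion is strictly causal: the update map consumes $y(k)$ only to produce the state at time $k+1$, so $x_s(k)$ and $\hat{x}_u(k)$ are functions of $y(0),\dots,y(k-1)$ alone. Hence, although $y(k)=u(k)+w(k)$ feeds the channel input back into the loop, there is no instantaneous algebraic loop, and the closed system is well defined: at each $k$ one generates, in order, the states at time $k$, then $\hat{u}(k)=[\,C_s\;C_u\,][x_s(k);\hat{x}_u(k)]$ and $\tilde{u}_u(k)=C_u\tilde{x}_u(k)$, then $u(k)=\tilde{u}_u(k)-\hat{u}(k)$, then $y(k)$, then the updated states at $k+1$.

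I would take as induction hypothesis that $y^{\mathrm{I}}(j)=y^{\mathrm{II}}(j)$ for all $j\le k-1$. At the initial time step the hypothesis is vacuous and all initial conditions agree ($x_s=0$, $\hat{x}_u=0$, $\tilde{x}_u=x_{u\,0}$ in both), which forces $\hat{u}$, $\tilde{u}_u$, $u$ and finally $y$ to agree there. For the inductive step, the coincidence of the past outputs together with identical recursions and identical initial data yields $x_s^{\mathrm{I}}(k)=x_s^{\mathrm{II}}(k)$ and $\hat{x}_u^{\mathrm{I}}(k)=\hat{x}_u^{\mathrm{II}}(k)$; since $\tilde{x}_u(k)$ is driven by the message alone it is trivially identical. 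Therefore $\hat{u}(k)$ and $\tilde{u}_u(k)$ agree across representations, so $u(k)=\tilde{u}_u(k)-\hat{u}(k)$ agrees and $y(k)=u(k)+w(k)$ agrees, closing the induction. The claim about the message estimate then drops out immediately, because $\hat{x}_{u\,0}(k)=A_u^{-k-1}\hat{x}_u(k+1)$ depends only on $\hat{x}_u(k+1)$, which the induction has just shown is common to both representations, and $A_u^{-k-1}$ is well defined since the eigenvalues of $A_u$ lie strictly outside the unit disc.

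There is no deep analytic difficulty here; the content is entirely structural. The only point I would flag as requiring care is the bookkeeping of the feedback loop: one must verify that the feedback signal the encoder consumes in Representation I (the decoder-computed $\hat{u}$) is generated by exactly the same map, applied to exactly the same $y$, as the internally recomputed $\hat{u}$ in Representation II, so that the two $\hat{u}$ streams coincide despite being produced at physically distinct nodes. Once the strict-causality offset is invoked to rule out an algebraic loop, this reduces to matching the two state-space descriptions term by term, which is precisely what the induction formalizes.
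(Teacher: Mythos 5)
Your proof is correct and follows essentially the same route as the paper, which simply asserts that the equivalence ``can be directly verified by comparing the channel inputs $u$ and the estimate $\hat{x}_{u\,0}$ of the two representations.'' Your causal induction on $k$ is exactly that direct verification made rigorous: since the encoder in Representation II duplicates the decoder's recursions with identical initial data, and strict causality rules out any algebraic loop, all internal states, hence $u(k)$, $y(k)$ and $\hat{x}_{u\,0}(k)$, coincide step by step.
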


\begin{remark} \label{remark:equivalence} It is important to notice that the ``equivalence'' only holds for such a channel \textit{without} an eavesdropper. This is because in our model the eavesdropper can access the feedback link. In \textit{Representation II}, since the channel output is directly fed back to Alice, the eavesdropper's access to both the channel output and the feedback link has no difference from the access to the channel output only. However, this is clearly not true for  \textit{Representation I}, in which the eavesdropper can extract more useful information from the decoding process (Bob) by accessing the feedback link.
\end{remark}

We next provide some insight on the proposed coding scheme. All the discussions in the rest of this section hold for both representations.

First of all, in the above proposed coding schemes, since the closed loop is stable,  $u(k)$ goes to zero with time if the noise is not present. Given that
\begin{equation}
\begin{split}
u(k)=&\tilde{u}_u(k)-\hat{u}(k)\\
=& C_u(\tilde{x}_u(k)-\hat{x}_u(k)) - C_s x_s(k),\\
\end{split}
\end{equation}
and the system is observable, we have $\hat{x}_u(k)\to -\tilde{x}_u(k)$. Thus, $-\hat{x}_{u\,0}(k)$ is an  estimate at time $k$ of $\tilde{x}_u(0)=x_{u,0}$. In the presence of noise, Theorem 4.3 in \cite{Elia2004} shows that the above coding scheme leads to $\hat{x}_{u\,0}(k-1) \backsim \mathcal{N}(-x_{u,0},A_u^{-k}\mathbb{E}[\hat{x}_u(k)\hat{x}_u(k)'] (A_u^{-k})')$ for large $k$, where $\mathbb{E}[\hat{x}_u(k)\hat{x}_u(k)']$ represents the state covariance matrix. Note that, since the stable closed-loop system is observable and controllable, the matrix $\mathbb{E}[\hat{x}_u(k)\hat{x}_u(k)']$ is positive definite (which converges to a steady state value and is independent from the initial state of the system) and so is $A_u^{-k}\mathbb{E}[\hat{x}_u(k)\hat{x}_u(k)'] (A_u^{-k})'$. Since the eigenvalues of $A_u$ are strictly outside the unit disc, it is straightforward to see that the state covariance matrix converges to zero (e.g. element-wise) as $k$ goes to infinity.

In addition, in the above coding scheme the message index $x_{u0}\in \mathbb{R}^m$ is allocated at the centroid of an unit hypercube in the coordinate system depending on $A_u \in \mathbb{R}^{m\times m}$. We refer interested readers to Theorem 4.3 in \cite{Elia2004} for details. For scalar $A_u \in \mathbb{C}$, the unit hypercube becomes an interval, e.g. $[-0.5,0.5]$. That is, $2^{nR_s}$ messages are represented by the middle points of equally divided $2^{nR_s}$ subintervals within $[-0.5,0.5]$. In the end, we note that the causality of the feedback channel is captured by the one-step delay of the state transition on $x_s$ and $\hat{x}_u$.

The next lemma indicates that the proposed coding scheme $\mathbb{K}$ is $C_{fb}$-achieving. The proof is omitted as it follows directly from Theorem 4.3 and Theorem 4.6 in \cite{Elia2004}.
\begin{proposition}
Consider stationary Gaussian channels in (\ref{capacity_short01}). Given a filter $\mathbb{Q}(e^{i\theta})\in \mathcal{RH}_2$, the proposed coding scheme based on the decomposition of $\mathbb{K} = -\mathbb{Q}(1+\mathbb{Q})^{-1}$ achieves a reliable transmission rate (in the sense of Shannon) at
$$\frac{1}{2\pi}\int_{-\pi}^{\pi}\log |1+\mathbb{Q}(e^{i\theta})|d\theta =\sum_{i=1}^{m} \log|\lambda_i(A_u)| \quad  \textit{bits/channel use},$$
and has double exponential decaying error probability.
\label{thm_capacity_achieving_code}
\end{proposition}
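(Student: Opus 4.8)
The plan is to prove the two assertions separately: first the rate identity
$$\frac{1}{2\pi}\int_{-\pi}^{\pi}\log|1+\mathbb{Q}(e^{i\theta})|d\theta=\sum_{i=1}^{m}\log|\lambda_i(A_u)|,$$
which is a purely analytic statement about the filter, and then the operational achievability together with the double-exponential error decay, which relies on the closed-loop dynamics.

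For the identity I would start from the algebraic relation $1+\mathbb{K}=(1+\mathbb{Q})^{-1}$, obtained directly from $\mathbb{K}=-\mathbb{Q}(1+\mathbb{Q})^{-1}$. Since $\mathbb{Q}\in\mathcal{RH}_2$ is stable and strictly causal, $1+\mathbb{Q}(z)$ is analytic in $|z|\ge 1$ (including $z=\infty$), has no poles outside the unit disc, and satisfies $(1+\mathbb{Q})(\infty)=1$ because strict causality forces $\mathbb{Q}(\infty)=0$. Applying Jensen's formula to the reflected function $\zeta\mapsto(1+\mathbb{Q})(1/\zeta)$, which is analytic on the closed unit disc and equals $1$ at $\zeta=0$, and changing variables back to $\theta$, gives
$$\frac{1}{2\pi}\int_{-\pi}^{\pi}\log|1+\mathbb{Q}(e^{i\theta})|d\theta=\sum_{|z_k|>1}\log|z_k|,$$
where the sum runs over the zeros $z_k$ of $1+\mathbb{Q}$ lying strictly outside the unit disc (this is the discrete-time Bode sensitivity integral). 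It then remains to identify these zeros with $\{\lambda_i(A_u)\}$: at any such $z_k$ one has $\mathbb{Q}(z_k)=-1\neq 0$, so no pole--zero cancellation occurs in $\mathbb{K}=-\mathbb{Q}(1+\mathbb{Q})^{-1}$ and $z_k$ is an unstable pole of $\mathbb{K}$, while conversely every unstable pole of $\mathbb{K}$ must be a zero of $1+\mathbb{Q}$ because $\mathbb{Q}$ itself is analytic outside the disc. Since the realization (\ref{codingScheme_SS}) collects exactly the unstable modes in $A_u$, and Remark \ref{rem:unit_zero} rules out zeros on the circle, the exterior zeros are precisely $\lambda_1(A_u),\dots,\lambda_m(A_u)$ counted with multiplicity, which yields the claimed equality.

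For achievability and the error exponent I would analyze the closed-loop recursion of \textit{Representation~I} (equivalently \textit{II}, by Proposition \ref{prop:equivalence repr}). Writing the decoding error as $e(k)=\hat{x}_{u\,0}(k-1)+x_{u\,0}$ and using observability and controllability of the stabilized loop, the state covariance $\Sigma_k=\mathbb{E}[\hat{x}_u(k)\hat{x}_u(k)']$ obeys a Lyapunov recursion and converges to a positive-definite steady state independent of $x_{u\,0}$; hence $e(k)$ is zero-mean Gaussian with covariance $A_u^{-k}\Sigma_k(A_u^{-k})'$, which contracts geometrically at the rate governed by $\prod_i|\lambda_i(A_u)|^{-2}$. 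With the messages placed at the centroids of $2^{nR_s}$ equal cells of the unit hypercube in the $A_u$-coordinate system, a minimum-distance (rounding) decoder errs only if $e(n)$ leaves its cell; the per-coordinate cell width scales like $2^{-nR_s/m}$ while the error standard deviation scales like $\prod_i|\lambda_i(A_u)|^{-n}$. Matching these two geometric rates shows that the largest $R_s$ for which (cell width)$/$(error std) still diverges is $R_s=\sum_i\log|\lambda_i(A_u)|$, and the Gaussian tail bound $\Pr[\mathcal{N}(0,1)>x]\le e^{-x^2/2}$ applied to this exponentially growing argument yields a doubly exponentially decaying $P_e^{(n)}$.

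The main obstacle is the second part, specifically the rigorous justification that $\Sigma_k$ converges to a positive-definite limit and that $e(k)$ is asymptotically Gaussian with the stated covariance; this is precisely the content of Theorem~4.3 in \cite{Elia2004} and rests on a careful Kalman-filtering and Lyapunov-stability analysis of the closed loop, further complicated by complex eigenvalues and nontrivial Jordan blocks of $A_u$, for which the hypercube geometry and the coordinatewise contraction must be reconciled. Once that analysis is in hand, the bookkeeping that converts the geometric contraction of the error into the rate $\sum_i\log|\lambda_i(A_u)|$ and into the double-exponential error decay is routine, which is why the statement can be cited directly from Theorems~4.3 and~4.6 of \cite{Elia2004}.
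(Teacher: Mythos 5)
Your proposal is correct, and for the operational part it takes exactly the paper's route: the paper omits the proof altogether, stating that it ``follows directly from Theorem 4.3 and Theorem 4.6 in \cite{Elia2004}'', and your achievability/error-exponent discussion is a faithful sketch of that Kalman--Lyapunov analysis which, as you acknowledge, ultimately defers to the same citation. Where you genuinely go beyond the paper is the rate identity: the paper never argues it (it is implicit in the cited Theorem 4.6, Elia's Bode--Shannon correspondence), whereas you give a self-contained proof via $1+\mathbb{K}=(1+\mathbb{Q})^{-1}$, strict causality forcing $\mathbb{Q}(\infty)=0$, Jensen's formula applied to the reflected function $\zeta\mapsto(1+\mathbb{Q})(1/\zeta)$, and the pole--zero identification of the exterior zeros of $1+\mathbb{Q}$ with the unstable poles of $\mathbb{K}$. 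That argument is sound; the two hypotheses you should state explicitly are (i) minimality (controllability and observability) of the realization (\ref{codingScheme_SS}), so that the unstable poles of $\mathbb{K}$, counted with multiplicity, are exactly the eigenvalues of $A_u$, and (ii) the absence of zeros of $1+\mathbb{Q}$ on the unit circle --- note that Remark \ref{rem:unit_zero} as literally stated concerns zeros of $\mathbb{Q}$ rather than of $1+\mathbb{Q}$, an imprecision you inherit from the paper, which invokes the remark in the same way. In short, each approach buys what one would expect: the paper's pure citation is shorter, while your hybrid makes the rate formula verifiable for any stabilizing $\mathbb{Q}\in\mathcal{RH}_2$ (not only the optimal one) without consulting \cite{Elia2004}, isolating the external dependence to the doubly exponential error-probability claim alone.
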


This proposition indicates that the achievable rate of the proposed coding schemes ({\textit{Representation I}} and {\textit{ Representation II}}) is explicitly characterized by the objective function in (\ref{capacity_short01}), leading to the fact that the proposed coding schemes are capacity-achieving if $\mathbb{Q}(e^{i\theta})$ is an optimal solution to (\ref{capacity_short01}). Moreover, this achievable rate is only determined by the unstable eigenvalues of the system, implying that the capacity-achieving coding scheme $\mathbb{K}$ must be unstable although the closed-loop system is stable.

Thus far, we have presented explicit feedback codes (in the state-space representation) that can achieve the feedback capacity. In the next section, we show that a variant of this coding scheme ${\mathbb K}$ can lead to the asymptotic zero leakage of the message to Eve. This implies $C_{sc} = C_{fb}$.
\section{Main Results}\label{sec::main results}
In this section, we first present our results for the ARMA(k) feedback channel with an eavesdropper, and then extend our results to the case with quantization noise in feedback.

\subsection{ARMA(k) Feedback Channel with an Eavesdropper}

We first present our new development on the properties of the feedback coding scheme for ARMA(k) Gaussian channels without the presence of an eavesdropper. We then use these properties to establish our main theorem that characterizes the feedback secrecy capacity and its achieving coding scheme.

The following result shows that, by choosing the particular $m$-step initializations (in the state-space representation) for the proposed coding scheme, the channel inputs ($k\geq m+1$) are only determined by the past additive Gaussian noise $w$, a fact that is vital to guarantee the asymptotic secrecy from Eve.
\begin{proposition}\label{prop_channel_input}
For the proposed coding scheme in Fig. \ref{fig:codingStructure_elia} or Fig. \ref{fig:codingStructure_est_feedback}, assume the first $m$-step channel inputs $u_1^m = A_u^{m+1} x_{u\,0}$ (where $A_u^{m+1}$ refers to matrix $A_u$ with power $m+1$), the estimate message $\hat{x}_{u\,0}(m) = {A_u^{-m-1}}{y_1^m}$ (or equivalently, $\hat{x}_u(m+1) = {y_1^m}$) and $x_s(m+1) = 0$, where $m$ is the number of the eigenvalues of matrix $A_u$. Then the induced channel inputs $u(k)$ for $k\geq m+1$ are only determined by the past Gaussian noise $w_1^{k-1}$.
\end{proposition}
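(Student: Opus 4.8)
The plan is to track a single ``error state'' that measures the mismatch between the encoder's open-loop message trajectory and the decoder's running estimate of it, and to show that this error is driven purely by the additive noise once the prescribed initialization is imposed. Concretely, I would define $\epsilon(k) = \tilde{x}_u(k) - \hat{x}_u(k)$, so that the channel input of either representation can be written as
\[
u(k) = C_u\tilde{x}_u(k) - C_s x_s(k) - C_u \hat{x}_u(k) = C_u \epsilon(k) - C_s x_s(k).
\]
Substituting the channel relation $y(k) = u(k) + w(k)$ into the recursions for $\hat{x}_u$ and $x_s$, and using $\tilde{x}_u(k+1) = A_u \tilde{x}_u(k)$, yields a joint linear recursion for the pair $(\epsilon(k), x_s(k))$ of the form $\epsilon(k+1) = (A_u - B_u C_u)\epsilon(k) + B_u C_s x_s(k) - B_u w(k)$ and $x_s(k+1) = B_s C_u \epsilon(k) + (A_s - B_s C_s) x_s(k) + B_s w(k)$. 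The decisive structural feature is that the message $x_{u\,0}$ has disappeared from these update equations: the only exogenous driving term is the noise $w(k)$, so the message can enter the trajectory for $k \geq m+1$ only through the initial conditions at time $m+1$.

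The crux of the argument, and the step I expect to be the main obstacle to state cleanly, is the base case at $k = m+1$, where all three initialization conditions must be used simultaneously. From the encoder dynamics $\tilde{x}_u(m+1) = A_u^{m+1} x_{u\,0}$, which by the first condition equals the stacked vector of the first $m$ channel inputs $u_1^m$. The second condition gives $\hat{x}_u(m+1) = y_1^m$, and since $y(i) = u(i) + w(i)$ for $i = 1,\dots,m$ we have $y_1^m = u_1^m + w_1^m$. Therefore
\[
\epsilon(m+1) = \tilde{x}_u(m+1) - \hat{x}_u(m+1) = u_1^m - (u_1^m + w_1^m) = -w_1^m,
\]
so the message cancels exactly and $\epsilon(m+1)$ is a function of $w_1^m$ alone, while the third condition gives $x_s(m+1) = 0$. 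The delicate bookkeeping I would carry out carefully here is the dimensional and index consistency of these identifications — the $m$ scalar inputs $u_1^m$ identified with the $m$-dimensional state $A_u^{m+1} x_{u\,0}$, and the equivalence $\hat{x}_{u\,0}(m) = A_u^{-m-1}\hat{x}_u(m+1)$ that lets the stated estimate condition be read as $\hat{x}_u(m+1) = y_1^m$.

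With the base case established, the conclusion follows by induction on $k$. Assuming $\epsilon(k)$ and $x_s(k)$ are deterministic linear functions of $w_1^{k-1}$, the joint recursion above expresses $\epsilon(k+1)$ and $x_s(k+1)$ as functions of $\bigl(\epsilon(k), x_s(k), w(k)\bigr)$, hence of $w_1^{k}$. Since $\epsilon(m+1)$ depends only on $w_1^m$ and $x_s(m+1)=0$, it follows that for every $k \geq m+1$ both $\epsilon(k)$ and $x_s(k)$ depend only on $w_1^{k-1}$, and therefore $u(k) = C_u \epsilon(k) - C_s x_s(k)$ is determined by $w_1^{k-1}$ alone, as claimed. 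Finally, by Proposition~\ref{prop:equivalence repr} the two representations produce identical $u(k)$, so it suffices to carry out this computation for a single representation.
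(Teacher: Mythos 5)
Your proof is correct and takes essentially the same route as the paper's: both track the encoder--decoder mismatch together with $x_s$, show that the message cancels exactly at the initialization time (your $\epsilon(m+1) = -w_1^m$ is precisely the paper's $\hat{x}_{u\,0}(m) - x_{u\,0} = A_u^{-m-1}w_1^m$ rescaled by $-A_u^{m+1}$), and then induct on the resulting noise-driven coupled linear recursion for the error and $x_s$. The only difference is cosmetic: you use the state-coordinate error $\tilde{x}_u(k)-\hat{x}_u(k)$, which yields a time-invariant recursion, while the paper uses the message-coordinate error $\hat{x}_{u\,0}(k-1)-x_{u\,0}$, which produces the time-varying coefficients $\alpha_k, \beta_k$; the substance is identical.
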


\begin{proof}
See Section \ref{prop:channel_input}.
\end{proof}

\begin{proposition}\label{prop_channel_input_capacity_acheving}
With the initializations defined in Proposition \ref{prop_channel_input}, the coding scheme ${\mathbb K}$ in Fig. \ref{fig:codingStructure_elia} or Fig. \ref{fig:codingStructure_est_feedback} remains to be $C_{fb}$-achieving.
\end{proposition}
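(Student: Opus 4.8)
The plan is to show that the $m$-step re-initialization of Proposition \ref{prop_channel_input} alters the coding scheme only over a finite, asymptotically vanishing fraction of the horizon, and therefore preserves both the achievable rate and the doubly exponential error decay established for the original scheme in Proposition \ref{thm_capacity_achieving_code}. Recall from that proposition (and from Theorems 4.3 and 4.6 in \cite{Elia2004}) that the decoder's reliability rests entirely on the estimation-error statistics $\hat{x}_{u\,0}(k-1) \sim \mathcal{N}(-x_{u\,0}, A_u^{-k}\mathbb{E}[\hat{x}_u(k)\hat{x}_u(k)'](A_u^{-k})')$, together with the fact that the state covariance $\mathbb{E}[\hat{x}_u(k)\hat{x}_u(k)']$ converges to a strictly positive-definite steady-state value $\Sigma$ that, crucially, does \emph{not} depend on the initial state of the closed loop. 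The achievable rate is then $\sum_{i=1}^m \log|\lambda_i(A_u)|$, which equals $C_{fb}$ when $\mathbb{Q}$ is optimal.

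First I would observe that, for $k \geq m+1$, the decoder in the re-initialized scheme runs exactly the same recursions $x_s(k+1) = A_s x_s(k) + B_s y(k)$ and $\hat{x}_u(k+1) = A_u \hat{x}_u(k) + B_u y(k)$ as in Fig. \ref{fig:codingStructure_elia}/\ref{fig:codingStructure_est_feedback}; the only change is the value of the state at time $m+1$, namely $x_s(m+1)=0$ and $\hat{x}_u(m+1)=y_1^m$. Hence the re-initialized scheme is simply the original closed loop started from a particular (message- and noise-dependent) state at the finite time $m+1$. Because the steady-state covariance $\Sigma$ is independent of the initial condition, the estimation-error covariance of the re-initialized decoder still converges to $A_u^{-k}\Sigma(A_u^{-k})'$ and decays to zero at the same geometric rate dictated by the unstable eigenvalues of $A_u$.

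Second, I would verify that the re-initialized estimate is asymptotically unbiased with this same error decay. By construction $\hat{x}_{u\,0}(m) = A_u^{-m-1}\hat{x}_u(m+1) = A_u^{-m-1}(A_u^{m+1}x_{u\,0}+w_1^m) = x_{u\,0} + A_u^{-m-1}w_1^m$, so the message is correctly loaded during the first $m$ uses and the subsequent feedback recursion refines this estimate exactly as in the unmodified scheme. Combining this with the covariance decay from the previous step gives $\hat{x}_{u\,0}(k) \to x_{u\,0}$ with Gaussian error covariance $A_u^{-k}\Sigma(A_u^{-k})'$, hence the same doubly exponentially decaying decoding error probability.

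Finally, I would account for the finite overhead. The message is loaded during the first $m$ channel uses, where $m$ (the number of unstable eigenvalues) is fixed and independent of the block length $n$; thus the fraction $m/n \to 0$ and the number of resolvable messages still grows like $2^{n\sum_i \log|\lambda_i(A_u)|}$, giving $\lim_{n\to\infty}\frac{1}{n}\log(\#\,\text{messages}) = \sum_i\log|\lambda_i(A_u)| = C_{fb}$. The bounded loading inputs $u_1^m = A_u^{m+1}x_{u\,0}$ contribute only $O(1/n)$ to the average power, while the steady-state inputs for $k\geq m+1$ meet the budget $P$ as before, so the power constraint holds asymptotically. I expect the main obstacle to be the second step: one must argue carefully that injecting the finite re-initialization at time $m+1$ does not perturb the asymptotic error-covariance decay, i.e., that the initial-condition independence of $\Sigma$ genuinely transfers to the re-initialized trajectory, and simultaneously confirm that the power budget is respected once the finite loading phase is absorbed into the average.
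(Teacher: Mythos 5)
Your proposal is correct and follows essentially the same route as the paper's own (much terser) proof: the initializations form a finite transient, so the average power --- governed by the steady state of the closed-loop LTI system, which is independent of initial conditions --- and the reliable rate --- defined asymptotically and determined solely by the unstable eigenvalues of $A_u$ via Proposition~\ref{thm_capacity_achieving_code} --- are both unchanged. Your write-up simply makes explicit the supporting details (unbiased message loading, initial-condition independence of the steady-state covariance, and the $O(1/n)$ power overhead of the loading phase) that the paper leaves implicit.
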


\begin{proof}
The proof follows directly from two facts. On the one hand, all these initials have no effect on the average transmission power on channel inputs, which only depend on the steady state of the underlying LTI systems. On the other hand, these initials do not change the reliable transmission rate of the coding scheme ${\mathbb K}$, which is defined in an asymptotic manner and only determined by the unstable eigenvalues in $A_u$ (Proposition \ref{thm_capacity_achieving_code}).
\end{proof}

This result provides a variant of the coding scheme ${\mathbb K}$, in which the first $m$ steps ($k=1, 2, \cdots, m$) of the channel input $u$ are different from others ($k\geq m+1$) and the states $\hat{x}_u$ and $x_s(m+1)$ of the coding dynamics are particularly initialized at $m+1$. By doing so, after the initialization phase ($k\leq m$), Alice only transmits the statistical learning information of ``noise'' rather than the message for $k\geq m+1$ in the forward channel, such that if only having access to the forward channel (which is essentially the case for \textit{ Representation II}) Eve cannot learn innovative information of the message through its noisy wiretap channels except in the first $m$-steps.


\begin{theorem}\label{thm_main}
Consider the ARMA(k) Gaussian wiretap channel with feedback (Fig. \ref{fig:problemModel}) under the average channel input power constraint $P>0$. Then,
\begin{enumerate}
\item the feedback secrecy capacity equals the feedback (Shannon) capacity, i.e., $C_{sc} = C_{fb}$, where $C_{fb}$ is obtained from Section \ref{sec:computation_C_fb};
\item the feedback secrecy capacity is achieved by the $C_{fb}$-achieving feedback coding scheme ${\mathbb K}$ (\textit{ Representation II}) with $u_1^m = A_u^{m+1} x_{u\,0}$, $\hat{x}_{u\,0}(m) = {A_u^{-m-1}}{y_1^m}$ (or equivalently, $\hat{x}_u(m+1) = {y_1^m}$), and $x_s(m+1) = 0$.
\end{enumerate}
\end{theorem}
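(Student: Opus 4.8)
The plan is to prove the achievability direction $C_{sc}\ge C_{fb}$; combined with the converse $C_{sc}\le C_{fb}$ already observed in Section \ref{sec:sys_model}, this gives $C_{sc}=C_{fb}$ and simultaneously certifies the claimed scheme. The reliability half is essentially free: by Proposition \ref{prop_channel_input_capacity_acheving}, the scheme $\mathbb{K}$ of \emph{Representation II} with the prescribed initialization meets the average power budget $P$ and decodes $x_{u\,0}$ at rate $C_{fb}$ with doubly–exponentially small error. Hence the entire burden is the equivocation condition $\tfrac1n I(x_{u\,0};z_1^n,\tilde z_1^n,\hat z_1^n)\to 0$, and the whole argument reduces to showing this leakage is bounded by a constant independent of $n$.

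Write $Z^n:=(z_1^n,\tilde z_1^n,\hat z_1^n)$ and split each of the three streams at time $m$. The first move is to invoke Proposition \ref{prop_channel_input}: for $k\ge m+1$ the input $u(k)$ is a deterministic function of $w_1^{k-1}$ only, so the tail block $Z_{m+1}^n$ is a deterministic function of the noises $(w_1^n,v_{m+1}^n,\tilde v_{m+1}^n,\hat v_{m+1}^n)$ and is therefore independent of the message. Thus $I(x_{u\,0};Z_{m+1}^n)=0$, and the chain rule collapses the leakage to $I(x_{u\,0};Z^n)=I(x_{u\,0};Z_1^m\mid Z_{m+1}^n)$. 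Intuitively, the message enters Eve's view only through the first $m$ symbols $u_1^m=A_u^{m+1}x_{u\,0}$; everything she sees afterward is pure noise, and the sole residual coupling between $Z_1^m$ and $Z_{m+1}^n$ is the common forward noise $w_1^m$.

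To control that residual coupling I would reveal $w_1^m$ to Eve as a genie. Because $(w_1^m,Z_{m+1}^n)$ is a function of the noises and hence jointly independent of $x_{u\,0}$, one checks via the chain rule that conditioning on $w_1^m$ cannot decrease the leakage, i.e. $I(x_{u\,0};Z_1^m\mid Z_{m+1}^n)\le I(x_{u\,0};Z_1^m\mid Z_{m+1}^n,w_1^m)$. Once $w_1^m$ is known, $Z_1^m$ is informationally equivalent to the three parallel observations $u_1^m+v_1^m$, $u_1^m+\tilde v_1^m$, $u_1^m+\hat v_1^m$ of the bounded signal $u_1^m=A_u^{m+1}x_{u\,0}$. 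This is a fixed $3m$–dimensional additive–noise problem, so by the maximum–entropy (Gaussian) bound the conditional mutual information is at most a finite quantity determined by the signal covariance $\Sigma_s=A_u^{m+1}\,\mathrm{Cov}(x_{u\,0})\,(A_u^{m+1})'$ and the conditional covariance $\Sigma_N$ of $(v_1^m,\tilde v_1^m,\hat v_1^m)$ given $(Z_{m+1}^n,w_1^m)$. Since $x_{u\,0}$ always lives in the fixed hypercube $[-0.5,0.5]^m$, $\Sigma_s$ is bounded uniformly in $n$, and the leakage is $O(1)$, giving $\tfrac1n I\to 0$.

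The hard part is certifying that $\Sigma_N$ does not collapse as $n\to\infty$: conditioning on the whole tail $Z_{m+1}^n$ could in principle let Eve estimate the early eavesdropper noises $v_1^m,\tilde v_1^m,\hat v_1^m$ ever more precisely and thereby denoise her view of $u_1^m$. This is exactly where the finite–memory hypothesis (\ref{model: wiretap channel}) is indispensable: only the samples within the memory windows $d,\tilde d,\hat d$ past time $m$ are correlated with $v_1^m,\tilde v_1^m,\hat v_1^m$, and these are themselves observed only through signal–plus–noise terms, so $\Sigma_N$ is bounded below by the covariance of $(v_1^m,\tilde v_1^m,\hat v_1^m)$ given \emph{all} remaining noise variables, which the strictly–positive–variance assumption keeps positive definite and bounded away from zero uniformly in $n$. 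Pinning down this lower bound on $\Sigma_N$ (equivalently, an $n$–independent bound on $h(Z_1^m\mid Z_{m+1}^n,w_1^m)-h(v_1^m,\tilde v_1^m,\hat v_1^m\mid Z_{m+1}^n,w_1^m)$) is the crux; the rest is bookkeeping with the chain rule and the finite–dimensional maximum–entropy estimate.
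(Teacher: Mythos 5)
Your skeleton matches the paper's proof: reliability is dispatched by Propositions \ref{thm_capacity_achieving_code} and \ref{prop_channel_input_capacity_acheving}, the tail observations are message-independent by Proposition \ref{prop_channel_input}, a genie reveals side information to Eve, and finite memory plus a finite-dimensional maximum-entropy estimate should make the leakage $O(1)$. Your opening moves are also correct: $I(x_{u\,0};Z^n)=I(x_{u\,0};Z_1^m\mid Z_{m+1}^n)$ by the chain rule, and the genie inequality $I(x_{u\,0};Z_1^m\mid Z_{m+1}^n)\le I(x_{u\,0};Z_1^m\mid Z_{m+1}^n,w_1^m)$ holds because $(Z_{m+1}^n,w_1^m)$ is a function of the noises and hence independent of $x_{u\,0}$. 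The problem is that the proposal stalls at its self-identified crux, and the crux as you frame it is not only unproven but wrongly posed. By keeping the tail observations $Z_{m+1}^n$ in the conditioning, you are forced to analyze the conditional law of $(v_1^m,\tilde v_1^m,\hat v_1^m)$ given $(Z_{m+1}^n,w_1^m)$, and you propose to control the leakage by the signal covariance $\Sigma_s$ and a lower bound on the conditional noise covariance $\Sigma_N$. For the non-Gaussian wiretap noises allowed here (only finite memory and positive bounded variance are assumed), no bound of that form exists: mutual information across an additive-noise channel is not controlled by second moments alone. For instance, a lattice-valued noise of positive variance can reveal $u_1^m=A_u^{m+1}x_{u\,0}$ exactly up to lattice ambiguity, making the leakage grow like $nR_s$. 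The quantity that actually needs a uniform-in-$n$ bound is the one in your parenthetical remark, namely the conditional \emph{differential entropy} $h(v_1^m,\tilde v_1^m,\hat v_1^m\mid Z_{m+1}^n,w_1^m)$, and a lower bound on $\Sigma_N$ can never supply it (a law can have large covariance and entropy $-\infty$). So there is a genuine gap exactly where you say the crux lies.

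The paper closes this gap with a stronger genie, which is the step you stopped short of: reveal to Eve not just $w_1^m$ but $w_1^n$ together with the tail wiretap noises $v_{m+1}^n,\tilde v_{m+1}^n,\hat v_{m+1}^n$ (steps (a)--(b) of (\ref{proof:entropy_ineq})). Then $Z_{m+1}^n$ is a deterministic function of the revealed variables (by Proposition \ref{prop_channel_input}) and drops out of the conditioning entirely, so no conditional covariance given the tail observations ever has to be analyzed. The per-process finite-memory assumption then collapses the conditioning on $v_{m+1}^n,\tilde v_{m+1}^n,\hat v_{m+1}^n$ to the fixed windows $v_{m+1}^{m+d},\tilde v_{m+1}^{m+\tilde d},\hat v_{m+1}^{m+\hat d}$ (step (d)), leaving $I(x_{u\,0};Z^n)\le I(x_{u\,0};\mathbb{A}x_{u\,0}+\mathbb{B})$ with $\mathbb{A},\mathbb{B}$ fixed and finite-dimensional and independent of $n$. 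The final estimate (\ref{proof:mutual_info_ineq}) then correctly uses the decomposition $h(\mathbb{A}x_{u\,0}+\mathbb{B})-h(\mathbb{B})$, bounding the first term by max-entropy and keeping the second as a (finite, $n$-independent) differential entropy rather than a covariance. Replacing your $w_1^m$-genie by this noise-revealing genie would turn your outline into the paper's proof; as written, the decisive estimate is missing.
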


\begin{proof}
See Section \ref{thm: main}.
\end{proof}

This theorem shows that there exists a feedback coding scheme such that the secrecy requirement can be achieved without loss of the communication rate of the legitimate users. In addition, Section \ref{sec: feedback capacity} provides such a feedback coding scheme that achieves the feedback secrecy capacity. In particular, a $C_{sc}$-achieving feedback code can be constructed from the optimal $\mathbb{Q}$ in (\ref{capacity_short01}) by following the procedures in Section \ref{sec: feedback capacity} (\textit{Representation II}) with the initializations defined in Proposition \ref{prop_channel_input}. The next corollary shows that the well-known $S$-$K$ scheme \cite{Schalkwijk66_2} is a special case of our proposed coding scheme.

\begin{corollary} \label{col:SK_robustness}
Consider the AWGN wiretap channel with feedback (Fig. \ref{fig:problemModel}) under the average channel input power constraint $P>0$. Assume that the additive noise $w$ has zero mean and variance $\sigma_w^2>0$. Then the proposed coding scheme ${\mathbb K}$ (\textit{Representation II}) with
$A_u = \sqrt{\frac{P+\sigma_w^2}{\sigma_w^2}}, \quad B_u = -\frac{\sqrt{A_u^2-1}}{A_u}, \quad C_u = -\sqrt{A_u^2-1},$
and $A_s=B_s = C_s = 0$ becomes the original \textit{S-K} scheme, and achieves the secrecy capacity
$C_{sf} = C_{fb} = \frac{1}{2}\log(1+\frac{P}{\sigma_w^2}).$
\end{corollary}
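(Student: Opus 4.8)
The plan is to treat the AWGN channel as the white-noise instance $S_w(\theta)\equiv\sigma_w^2$ and to verify, by direct substitution, that the prescribed scalar parameters turn Representation II into (i) a $C_{fb}$-achieving scheme and (ii) the classical Schalkwijk--Kailath iteration; the secrecy claim then rides on the same structural mechanism that proves Theorem \ref{thm_main}. Here $A_u\in\mathbb{R}$ is a single unstable scalar ($A_u>1$ since $P,\sigma_w^2>0$), so $m=1$, and the stable block $(A_s,B_s,C_s)=(0,0,0)$ simply disappears, leaving a one-dimensional recursion.

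First I would establish $C_{fb}$-achievement. By Proposition \ref{thm_capacity_achieving_code} the reliable rate is $\log|A_u|=\tfrac12\log\frac{P+\sigma_w^2}{\sigma_w^2}=\tfrac12\log(1+P/\sigma_w^2)$, the known feedback capacity of the AWGN channel. To confirm consistency with (\ref{capacity_short01}), I would form $\mathbb{K}(z)=C_u(z-A_u)^{-1}B_u=\frac{(A_u^2-1)/A_u}{z-A_u}$ and invert $\mathbb{K}=-\mathbb{Q}(1+\mathbb{Q})^{-1}$ to get
\[
\mathbb{Q}(z)=-\frac{(A_u^2-1)/A_u}{\,z-1/A_u\,},\qquad 1+\mathbb{Q}(z)=\frac{z-A_u}{z-1/A_u}.
\]
This $\mathbb{Q}$ is strictly proper (hence strictly causal) with its only pole at $1/A_u$ inside the unit disc, so $\mathbb{Q}\in\mathcal{RH}_2$. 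A one-line residue computation gives $\frac{\sigma_w^2}{2\pi}\int_{-\pi}^{\pi}|\mathbb{Q}(e^{j\theta})|^2 d\theta=\sigma_w^2(A_u^2-1)=P$, so the power constraint in (\ref{capacity_short01}) holds with equality, and Jensen's formula gives $\frac1{2\pi}\int_{-\pi}^{\pi}\log|1+\mathbb{Q}(e^{j\theta})|\,d\theta=\log A_u$, matching the rate. Hence $\mathbb{Q}$ is optimal for the white-noise instance of (\ref{capacity_short01}).

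Next I would identify the recursion with the original S--K scheme. With $(A_s,B_s,C_s)=0$ the Representation II dynamics collapse to $\hat{x}_u(k+1)=A_u\hat{x}_u(k)+B_uy(k)$ and $u(k)=C_u\bigl(A_u^kx_{u\,0}-\hat{x}_u(k)\bigr)$, with $\hat{x}_{u\,0}(k)=A_u^{-k-1}\hat{x}_u(k+1)$; after the initialization $\hat{x}_u(2)=y_1$, $u_1=A_u^2x_{u\,0}$ of Theorem \ref{thm_main} (here $m=1$), a direct rewriting shows that each transmitted symbol is, up to the fixed scaling $C_u=-\sqrt{A_u^2-1}$, the current receiver estimation error re-amplified by $A_u$, with estimation-error variance contracting by $1/A_u^2=\sigma_w^2/(P+\sigma_w^2)$ per step --- exactly the Schalkwijk--Kailath recursion. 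The secrecy property is then inherited: Proposition \ref{prop_channel_input} is purely structural and valid for scalar $A_u$, so $u(k)$ for $k\ge 2$ is a function of the past noise $w_1^{k-1}$ alone, which drives the leakage to zero; combined with $C_{sc}\le C_{fb}$ this yields $C_{sc}=C_{fb}=\tfrac12\log(1+P/\sigma_w^2)$.

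The main obstacle I anticipate is bookkeeping rather than conceptual: AWGN is white and therefore falls outside Assumption \ref{LTIFD.ass} (and outside Remark \ref{rem:unit_zero}), so I must check that Proposition \ref{prop_channel_input}, Proposition \ref{thm_capacity_achieving_code}, and the leakage analysis behind Theorem \ref{thm_main} all survive in this degenerate one-dimensional case --- the capacity value itself being independently known for the feedback AWGN channel --- and I must match normalizations carefully when aligning the state-space updates with the textbook S--K iteration, since the specific choices $B_u=-\sqrt{A_u^2-1}/A_u$ and $C_u=-\sqrt{A_u^2-1}$ are precisely what make the power equal $P$ and the error recursion self-consistent.
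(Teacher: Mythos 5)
Your proposal is correct, and its core is the same as the paper's: substitute the prescribed scalar parameters into the Representation~II recursions (with the $m=1$ initialization $u(1)=A_u^2x_{u\,0}$, $\hat{x}_u(2)=y(1)$), observe that the resulting dynamics are the Schalkwijk--Kailath iteration, and inherit capacity achievement and vanishing leakage from Proposition~\ref{thm_capacity_achieving_code}, Proposition~\ref{prop_channel_input} and the proof of Theorem~\ref{thm_main}. The paper's own proof does exactly this and nothing more: it writes out both recursions, $u(k)=\sqrt{A_u^2-1}\,A_u^{k}\bigl(\hat{x}_{u\,0}(k-1)-x_{u\,0}\bigr)$ for the proposed scheme versus $u(k)=\sqrt{A_u^2-1}\,A_u^{k-1}\bigl(\hat{x}_{u\,0}(k-1)-x_{u\,0}\bigr)$ for \textit{S-K}, and notes that rescaling the message and its estimate by the constant factor $A_u$ makes them identical, which affects neither rate nor power. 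What you add beyond the paper is worthwhile: (i) the frequency-domain verification that $\mathbb{K}(z)=\frac{(A_u^2-1)/A_u}{z-A_u}$ corresponds to the strictly causal $\mathbb{Q}(z)=-\frac{(A_u^2-1)/A_u}{z-1/A_u}$, meets the power constraint with equality ($\sigma_w^2(A_u^2-1)=P$) and yields rate $\log A_u$ by Jensen's formula --- this makes the capacity-achievement claim self-contained for white noise, a useful supplement since AWGN technically falls outside Assumption~\ref{LTIFD.ass} and the paper handles this only implicitly (via its footnote and the remark that the corollary recovers a known result); and (ii) the explicit secrecy chain, Proposition~\ref{prop_channel_input} with $m=1$ giving $u(k)=\phi_k(w_1^{k-1})$ for $k\geq 2$, hence vanishing leakage, hence $C_{sc}=C_{fb}$ via $C_{sc}\leq C_{fb}$, which the paper leaves implicit. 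The one point where the paper is more explicit than you: it pins down the exact normalization mismatch between the two recursions (the index offset $A_u^{k}$ versus $A_u^{k-1}$) and identifies the scaling by $A_u$ that absorbs it, whereas you only promise to match normalizations ``carefully''; carrying out your ``direct rewriting'' will surface precisely this one-step offset, and resolving it is the only nontrivial bookkeeping in the whole argument.
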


 \begin{proof}
See Section \ref{col: SK}.
\end{proof}

This corollary recovers Theorem 5.1 in \cite{Gunduz2008}, showing that the well-known \textit{S-K} scheme not only achieves the feedback capacity but also automatically provides the secrecy from the eavesdropper.

\begin{remark}
As discussed in Remark \ref{remark:equivalence}, if the eavesdropper has no access to the feedback link, then the above results hold for the coding scheme in \textit{Representation I} as well. That is, the feedback secrecy capacity $C_{sc}$ is achieved by the $C_{fb}$-achieving feedback coding scheme ${\mathbb K}$ (\textit{Representation I}) with the selected initializations. The proof of this argument follows directly from the proof of Theorem \ref{thm_main} by deleting the eavesdropper's access to the feedback link. Therefore, we conclude that the channel-output-based feedback coding scheme \textit{Representation II} is highly necessary to remove the advantage of the eavesdropper's access to the feedback link. If no such access, the ``equivalence'' between \textit{Representation I} and \textit{Representation II} holds for channels with the eavesdropper.
\end{remark}
\subsection{Feedback with Quantization Noise }\label{sec:quantized}
In this section, we extend our result to Gaussian channels with quantized feedback. It is noteworthy that the capacity of colored Gaussian channels with noisy feedback remains an open problem \cite{chong_isit11_capacity} \cite{Chong11_allerton_upperbound} \cite{chong.thesis}, even when simplified to the quantization feedback. Therefore, in this paper, as an initial step towards the secrecy capacity of noisy feedback Gaussian channels, we focus on AWGN channels with quantized feedback. In \cite{Martins08}, the authors presented a linear coding scheme featuring a positive information rate and a positive error exponent for AWGN channels with feedback corrupted by quantization or bounded noise. In what follows, we show that our proposed \textit{linear} coding scheme \textit{Representation II}, when specified to the AWGN channel with quantized feedback, converges to the scheme in \cite{Martins08} and, more importantly, leads to a positive secrecy rate. Furthermore, this achievable secrecy rate converges to the capacity of the AWGN channel as the amplitude of the quantization noise decreases to zero.

Firstly, we define a memoryless uniform quantizer with sensitivity $\sigma_q$ as follows \cite{Martins08}.
\begin{definition}\label{def:quantizer}
Given a real parameter $\sigma_q>0$, a uniform quantizer with sensitivity $\sigma_q$ is a function $\Phi_{\sigma_q}$: $\mathbb{R}\rightarrow \mathbb{R}$ defined as
$$\Phi_{\sigma_q}(y) = 2 \sigma_q \lfloor{\frac{y+\sigma_q}{2 \sigma_q}}\rfloor,$$
where $\lfloor \cdot \rfloor$ represents the floor function. Then, the quantization error at instant $k$, i.e., the feedback noise, is given by
$$ q(k) = \Phi_{\sigma_q}(y(k)) - y(k).$$
\end{definition}

Notice that, for a given channel output $y(k)$, the quantization noise $q(k)$ can be recovered by the decoder as we assume the decoder knows the quantization rule. In other words, the decoder can get access to both the channel outputs and the feedback noise while the encoder can only get access to the corrupted channel output. On the other hand, note that with the quantized feedback the coding schemes \textit{Representation I} and \textit{Representation II} are not equivalent any more due to the different feedback signals. The \textit{Representation I} may not be applicable here. Therefore, we tailor the coding scheme \textit{Representation II} to the AWGN channel with quantized feedback as follows (Fig. \ref{fig:codingStructure_quantization}). We first let $A_s = B_s = C_s =0$.

\begin{figure*}
\begin{center}
\includegraphics[scale=0.4]{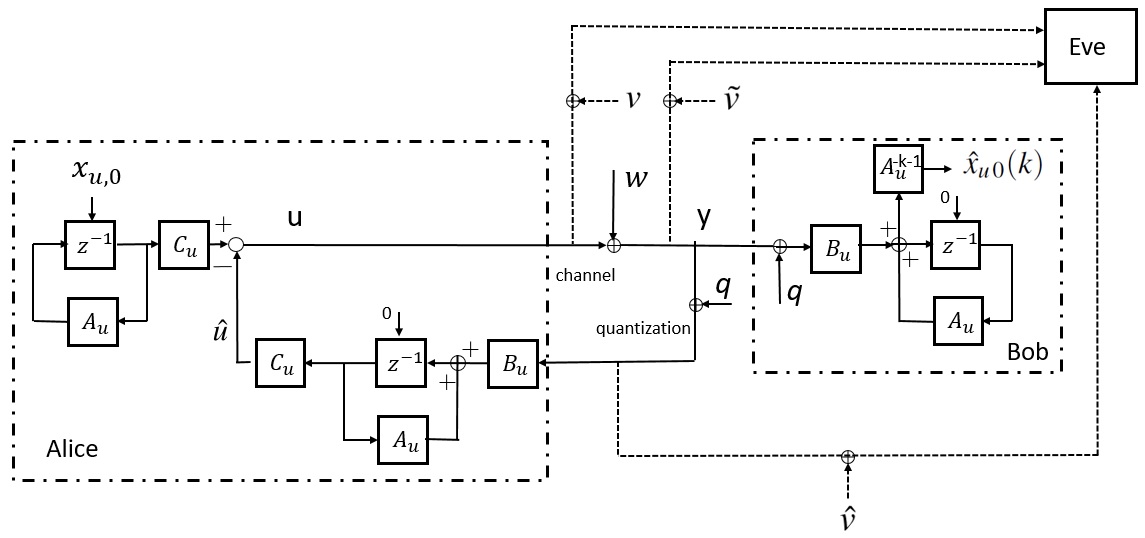}
\caption{Coding Structure for the AWGN channel with quantized feedback. The quantization noise $q$ can be recovered by the decoder to help decoding. }
\label{fig:codingStructure_quantization}
\end{center}
\end{figure*}

{\bf{Decoder:}} The decoder runs ${\mathbb K}$ driven by the sum of the channel output $y$ and the quantization noise $q$.
$$
\begin{array}{ccl}
\hat{x}_u(k+1)&=&A_u\hat{x}_u(k)+B_u (y(k)+q(k)),\; \hat{x}_{u}(0)=0.
\end{array}
$$
It produces an estimate of the initial condition of the encoder
$$
\hat{x}_{u\,0}(k)=A_u^{-k-1}\hat{x}_u(k+1).
$$

{\bf Encoder:} The encoder runs the following dynamics
$$
\begin{array}{rcl}
\tilde{x}_u(k+1)&=&A_u\tilde{x}_u(k),\;\tilde{x}_u(0)=x_{u\,0},\\
\tilde{u}_u(k)&=&C_u\tilde{x}_u(k).
\end{array}
$$
It receives $y+v$ and duplicates the decoding dynamics,
$$
\begin{array}{ccl}
\hat{x}_u(k+1)&=&A_u\hat{x}_u(k)+B_u (y(k)+q(k)),\; \hat{x}_{u}(0)=0,
\end{array}
$$
and produces a signal,
$$
\hat{u}(k) =  C_u \hat{x}_u(k).
$$
Then, it produces the channel input
$$
u(k)=\tilde{u}_u(k)-\hat{u}(k).
$$

We next show that the above coding scheme can achieve a positive secrecy rate, which converges to the AWGN capacity as the feedback noise $\sigma_q$ decreases. The following definition will be used to characterize this secrecy rate.
\begin{definition} \label{def_rv} \cite{Martins08}
For the given positive real parameters $\sigma_w^2, \sigma_q$ and the power constraint $P$, define a parameter $r_q$ as follows.
\begin{enumerate}
\item If $4\sigma_q \leq P$, $r_q$ is the nonnegative real solution of the following equation,
$$ \sigma_w \sqrt{2^{2r_q}-1} = \sqrt{P} - \sigma_q (1+ 2^{r_q}). $$
\item If $4\sigma_q > P$, then $r_q = 0$.
\end{enumerate}
\end{definition}

It is easy to check that $r_q$ satisfies the following three properties \cite{Martins08}
\begin{enumerate}
\item $r_q$ converges to the AWGN capacity as $\sigma_q$ decreases, i.e.,
 $$ \lim_{\sigma_q \rightarrow 0^+} r_q = \frac{1}{2}\log(1+\frac{P}{\sigma_w^2}).$$
\item If $\sigma_q = \frac{\sqrt{P}}{2}$, we have $r_q = 0$.
\item If $P \gg \max\lbrace \sigma_w^2, \sigma_q^2\rbrace$, $r_q \simeq \log(\frac{\sqrt{P}}{\sigma_w+\sigma_q})$. In other words, the ratio of $r_q$ and $\log(\frac{\sqrt{P}}{\sigma_w+\sigma_q})$ converges to $1$ as $P\rightarrow \infty$.
\end{enumerate}

\begin{theorem}\label{thm:quantized}
Consider an AWGN channel with uniformly memoryless quantized feedback defined in Definition \ref{def:quantizer}, where the channel input power constraint is $P>0$ and the noise variance of the AWGN channel and the quantization sensitivity in the feedback link are assumed to be $\sigma_w^2$ and $\sigma_q$, respectively. Assume $u(1) = A_u^{2} x_{u\,0}$, and $\hat{x}_{u\,0}(1) = {A_u^{-2}}(y(1)+q(1))$ (or equivalently, $\hat{x}_u(2) = y(1)+q(1)$). Then, the above proposed coding scheme with $A_u = 2^r, B_u = -1, C_u = A_u-\frac{1}{Au}$ and $A_s = B_s = C_s =0$ achieves a secrecy rate $r$ for all $r < r_q$ ($r_q$ is defined in Definition \ref{def_rv}).
\end{theorem}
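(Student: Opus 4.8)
The plan is to establish the three ingredients of a secrecy-rate claim separately: (i) reliability, i.e. that the legitimate decoder recovers $x_{u\,0}$ at rate $r$; (ii) feasibility of the average power constraint $P$; and (iii) the secrecy condition $\frac1n I(x_{u\,0};z_1^n,\tilde z_1^n,\hat z_1^n)\to 0$. The reduction underlying all three is to track the scalar state error $e(k)=\tilde x_u(k)-\hat x_u(k)$, which drives both the channel input $u(k)=C_u e(k)$ and the estimate $\hat x_{u\,0}(k)=x_{u\,0}-A_u^{-k-1}e(k+1)$. Writing $n(k)=w(k)+q(k)$ for the combined additive-plus-quantization noise and substituting $y(k)+q(k)=C_u e(k)+n(k)$ into the decoder recursion yields the first-order recursion $e(k+1)=(A_u-B_uC_u)\,e(k)-B_u n(k)$, in which the parameters are chosen (exactly as in the S-K calibration of Corollary \ref{col:SK_robustness}) so that the closed-loop gain has magnitude $1/A_u=2^{-r}<1$. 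Since this is precisely the closed loop analyzed in \cite{Martins08}, I would invoke its reliability and positive-error-exponent analysis for the combined Gaussian-and-bounded noise to obtain achievability of rate $r$; the only thing to check is that the realization with $A_s=B_s=C_s=0$ reduces to their scheme, which is immediate.

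For the power constraint I would compute the steady-state strength of $u(k)=C_u e(k)$ by splitting the noise into its Gaussian and quantization parts, $e=e_w+e_q$. The response to the white Gaussian part has variance $\sigma_w^2/(1-\rho^2)$ with $\rho=2^{-r}$, and the identity $C_u/\sqrt{1-\rho^2}=\sqrt{2^{2r}-1}$ gives $\|C_u e_w\|_2=\sigma_w\sqrt{2^{2r}-1}$; the response to the bounded part obeys $|e_q|\le \sigma_q/(1-\rho)$, and $C_u/(1-\rho)=1+2^r$ gives $\|C_u e_q\|_2\le \sigma_q(1+2^r)$. By Minkowski's inequality the input RMS is therefore at most $\sigma_w\sqrt{2^{2r}-1}+\sigma_q(1+2^r)$, which is $\le \sqrt P$ exactly when $r\le r_q$ by Definition \ref{def_rv}, with strict slack for $r<r_q$; the single high-power initialization step $u(1)=A_u^2 x_{u\,0}$ contributes a bounded amount that is washed out by the $1/n$ averaging. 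This is the clean part of the argument.

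The crux, and the genuinely new step relative to \cite{Martins08}, is secrecy. I would first specialize Proposition \ref{prop_channel_input} to the present $m=1$ scalar setting: with $u(1)=A_u^2 x_{u\,0}$ and $\hat x_u(2)=y(1)+q(1)$ one computes $e(2)=\tilde x_u(2)-\hat x_u(2)=A_u^2 x_{u\,0}-\bigl(A_u^2 x_{u\,0}+w(1)+q(1)\bigr)=-n(1)$, so the message cancels after the first step and $e(k)$, hence $u(k)=C_u e(k)$, depends for every $k\ge 2$ only on the past noise $n_1^{k-1}$ and not on $x_{u\,0}$. Consequently every eavesdropper observation at times $k\ge 2$ — including the quantized-feedback leakage $\hat z(k)=y(k)+q(k)+\hat v(k)$ — is a function solely of the noise processes and of Eve's own noises, and is therefore statistically independent of $x_{u\,0}$. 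Using the chain rule, $I(x_{u\,0};\mathrm{Eve}_1^n)=I(x_{u\,0};\mathrm{Eve}_1\mid \mathrm{Eve}_2^n)$, which is bounded by the information in the three scalar first-step observations $z(1),\tilde z(1),\hat z(1)$, a finite constant independent of $n$; dividing by $n$ drives the leakage rate to zero, exactly as in the proof of Theorem \ref{thm_main}.

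I expect the secrecy step to be the main obstacle, for two reasons. First, one must verify that Eve's access to the \emph{quantized} feedback does not break the freezing argument: because $q(k)$ is a deterministic function of $y(k)$, the observation $\hat z(k)$ could in principle carry extra structure, and the argument hinges on the fact established above that for $k\ge 2$ this quantity is still a function of noise alone. Second, the mutual-information bound must be made rigorous for continuous observations that share the common noise $w(1),q(1)$ with the later inputs, so care is needed to isolate the genuinely message-bearing part and to show the conditional information term is finite; I would handle this with the same differential-entropy bound used for Theorem \ref{thm_main}. By comparison, the reliability inherited from \cite{Martins08} and the power computation are routine.
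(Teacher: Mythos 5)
Your proposal follows essentially the same route as the paper's own proof: (i) reliability for every $r<r_q$ by identifying the scheme (with $A_s=B_s=C_s=0$) with the scheme of \cite{Martins08}; (ii) specialization of Proposition \ref{prop_channel_input} to the scalar case $m=1$, showing the message cancels from the channel input after the first step (your $e(2)=-n(1)$ is exactly the paper's $u(2)=-C_u(w(1)+q(1))$); and (iii) a leakage bound that reduces Eve's information to the time-$1$ observations and vanishes after dividing by $n$, as in Theorem \ref{thm_main}. Where you differ is in degree of explicitness: your steady-state power computation, splitting $e=e_w+e_q$ and showing the input RMS is at most $\sigma_w\sqrt{2^{2r}-1}+\sigma_q(1+2^r)\le\sqrt P$ precisely when $r\le r_q$, reconstructs where Definition \ref{def_rv} comes from, whereas the paper absorbs both reliability and power into a wholesale citation of Theorem 3.2 of \cite{Martins08}; that is a useful, self-contained addition. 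One slip you inherit from the statement itself: with the literal parameters $B_u=-1$, $C_u=A_u-1/A_u$, the closed-loop gain is $A_u-B_uC_u=2A_u-1/A_u>1$, not the $1/A_u$ you assert; a stable loop needs $B_uC_u=(A_u^2-1)/A_u>0$ as in Corollary \ref{col:SK_robustness} (the paper's derivation of (\ref{equ:Marin_encoder}) contains the same compensating sign error), so the calibration should be corrected before stability is invoked.

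The step that is genuinely incomplete --- in your proposal, and in fairness in the paper's proof as well --- is the assertion that the inputs and Eve's observations at times $k\ge 2$ are \emph{statistically independent} of $x_{u\,0}$. Being ``a function of the noise $n_1^{k-1}$ alone'' does not give independence here, because the quantization error is not exogenous noise: by Definition \ref{def:quantizer}, $q(1)=\Phi_{\sigma_q}(y(1))-y(1)$ is a deterministic function of $y(1)=A_u^2x_{u\,0}+w(1)$, so
\begin{equation*}
n(1)=w(1)+q(1)=\Phi_{\sigma_q}(y(1))-A_u^2x_{u\,0}\equiv -A_u^2x_{u\,0}\pmod{2\sigma_q},
\end{equation*}
i.e., the quantity your freezing argument treats as pure noise carries the message's residue modulo the quantization lattice exactly (an observer of $n(1)$ itself would, for generic message spacing, recover $x_{u\,0}$ completely, since the $2^{nR_s}$ residues are distinct). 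Consequently your chain-rule identity $I(x_{u\,0};\mathrm{Eve}_1^n)=I(x_{u\,0};\mathrm{Eve}_1\mid\mathrm{Eve}_2^n)$, which presumes $I(x_{u\,0};\mathrm{Eve}_2^n)=0$, is unjustified --- exactly as is the paper's final equality that drops $(w_1^n,q_1^n)$ from the conditioning as if $q$ were independent of everything else (the justification imported from Theorem \ref{thm_main} covers only $w$). You do flag this as the crux, but the resolution you offer (``it is still a function of noise alone'') is circular. A complete proof must actually bound the information that Eve's \emph{noisy} observations carry about this residue, for instance by exploiting that the fresh Gaussian noises $w(k)$ re-randomize the wrapped (mod-$2\sigma_q$) structure at each step and that the wiretap noises have strictly positive variance, so that the total leakage is $o(n)$; no such argument appears in your proposal or in the paper, and supplying it is the real remaining work in the secrecy half of this theorem.
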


\begin{proof}
See Section \ref{thm: quantized}.
\end{proof}

Combined with the aforementioned property (1) on $r_q$, this theorem implies that the achievable feedback secrecy rate of the proposed coding scheme converges to the AWGN capacity as $\sigma_q$ decreases to zero.

\section{Technical Proofs}{\label{sec:proofs}}

In this section, we present the omitted proofs in Sections \ref{sec::main results}.
%

\subsection{Proof of Proposition \ref{prop_channel_input}} \label{prop:channel_input}

Based on the \textit{ Representation I} of the proposed feedback coding scheme, we have
\begin{equation}\label{equ:input_u_orthoganal}
\begin{split}
u(k)=&\tilde{u}_u(k)-\hat{u}(k)\\
=& C_u(\tilde{x}_u(k)-\hat{x}_u(k)) - C_s x_s(k)\\
=& C_u(A_u\tilde{x}_u(k-1)- A_u^{k}\hat{x}_{u\,0}(k-1)) - C_s x_s(k)\\
= & \cdots\\
=& C_u (A_u^{k} x_{u\,0}-  A_u^{k} \hat{x}_{u\,0}(k-1)) - C_s x_s(k)\\
=& C_u A_u^{k} (x_{u\,0}- \hat{x}_{u\,0}(k-1)) - C_s x_s(k).\\
\end{split}
\end{equation}
Next, for $k\geq m+1$ where the initial steps have passed and the signals evolve as described in the coding scheme, it yields
\begin{equation}\label{equ:hat_msg}
\begin{split}
\hat{x}_{u\,0}(k)=&A_u^{-k-1}\hat{x}_u(k+1)\\
=& A_u^{-k-1}(A_u\hat{x}_u(k)+B_u y(k)) \\
=& A_u^{-k}\hat{x}_u(k)+A_u^{-k-1}B_u y(k) \\
=& \hat{x}_{u\,0}(k-1)+A_u^{-k-1}B_u y(k) \\
=& \hat{x}_{u\,0}(k-1)+A_u^{-k-1}B_u (u(k)+w(k))\\
\stackrel{(a)}{=}& \hat{x}_{u\,0}(k-1)+A_u^{-k-1}B_u (C_u A_u^{k} x_{u\,0}- C_uA_u^{k}\hat{x}_{u\,0}(k-1) - C_s x_s(k)+w(k))\\
=& \hat{x}_{u\,0}(k-1) - A_u^{-k-1}B_u C_uA_u^{k}\hat{x}_{u\,0}(k-1)+A_u^{-k-1}B_u C_u A_u^{k} x_{u\,0} + A_u^{-k-1}B_u ( w(k) - C_s x_s(k))+ x_{u\,0} - x_{u\,0}\\
=& (I - A_u^{-k-1}B_u C_uA_u^{k})\hat{x}_{u\,0}(k-1) -(I- A_u^{-k-1}B_u C_uA_u^{k})x_{u\,0} + A_u^{-k-1}B_u (w(k) - C_s x_s(k)) + x_{u\,0}\\
=& (I -  A_u^{-k-1}B_u C_uA_u^{k})(\hat{x}_{u\,0}(k-1) -x_{u\,0}) + A_u^{-k-1}B_u (w(k) - C_s x_s(k)) + x_{u\,0},\\
\end{split}
\end{equation}
where step (a) follows from (\ref{equ:input_u_orthoganal}). Let $\alpha_k = I -  A_u^{-k-1}B_u C_uA_u^{k}$ and $\beta_k = A_u^{-k-1}B_u$. Moving $x_{u\,0}$ to the left side, we have
\begin{equation*}
\begin{split}
&\hat{x}_{u\,0}(k) - x_{u\,0} = \alpha_k(\hat{x}_{u\,0}(k-1) -x_{u\,0}) + \beta_k  (w(k) - C_s x_s(k)).\\
\end{split}
\end{equation*}
By iterating the above equation, for $k\geq m+1$, we obtain
\begin{equation*}
\begin{split}
&\hat{x}_{u\,0}(k) - x_{u\,0}= \prod_{i=m+1}^k \alpha_i (\hat{x}_{u\,0}(m) -x_{u\,0}) + \sum_{i=m+1}^{k} \prod_{j=i+1}^k \alpha_j \beta_i (w(i) - C_s x_s(i)).\\
\end{split}
\end{equation*}
where we assume $\alpha_{k+1} = 1$. Given $u_1^m = A_u^{m+1} x_{u\,0}$ and $\hat{x}_{u\,0}(m) = {A_u^{-m-1}}{y_1^m}$, we have
$$ \hat{x}_{u\,0}(m) = A_u^{-m-1} (u_1^m + w_1^m) = x_{u\,0}+ {A_u^{-m-1}}{w_1^m}.$$
Then, it yields
\begin{equation*}
\begin{split}
&\hat{x}_{u\,0}(k) - x_{u\,0}= \prod_{i=m+1}^k \alpha_i {A_u^{-m-1}}{w_1^m} + \sum_{i=m+1}^{k} \prod_{j=i+1}^k \alpha_j \beta_i (w(i) - C_s x_s(i)).\\
\end{split}
\end{equation*}

Furthermore, given $x_s(m+1) = 0$, from (\ref{equ:input_u_orthoganal}) we have
\begin{equation}
\begin{split}
u(m+1) =& -C_u A_u^{m+1} ( \hat{x}_{u\,0}(m) - x_{u\,0}) - C_s x_s(m+1)\\
 =& -C_u A_u^{m+1} ( \hat{x}_{u\,0}(m) - x_{u\,0}) \\
 =& -C_u w_1^m.\\
\end{split}
\label{equ::u2}
\end{equation}

Now, for $k\geq m+2$, we present the channel inputs $u(k)$ and recall the evolution of $x_s(k)$ as follows,
\begin{equation}\label{equ:u_k and x_s iterations}
\begin{split}
u(k)=& -C_u A_u^{k} (\hat{x}_{u\,0}(k-1) - x_{u\,0}) - C_s x_s(k)\\
=&-C_u A_u^{k}\bigg(  \prod_{i=m+1}^{k-1} \alpha_i {A_u^{-m-1}}{w_1^m} + \sum_{i=m+1}^{k-1} \prod_{j=i+1}^{k-1} \alpha_j \beta_i (w(i) - C_s x_s(i))\bigg) -  C_s x_s(k),\\
\text{and} \qquad &\\
x_s(k)=&A_sx_s(k-1)+B_s(u(k-1)+w(k-1)).\\
\end{split}
\end{equation}

Starting with $u(m+1) = -C_u w_1^m$ and $x_s(m+1)=0$, the above coupled iterations induce values of $u(k)$ and $x_s(k)$ that depend only on noise $w_1^{k-1}$. Therefore, for $k\geq m+2$
\begin{equation}\label{equ:input_u}
\begin{split}
u(k) \triangleq \phi_k(w_1^{k-1}),
\end{split}
\end{equation}
where the mapping $\phi_k: \mathbb{R}^{k-1} \rightarrow \mathbb{R}$ is defined by the iterations in (\ref{equ:u_k and x_s iterations}). Combining with (\ref{equ::u2}), we conclude that, for $k\geq m+1$, $u(k)$ only depends on $w_1^{k-1}$. Due to the equivalence between \textit{ Representation I} and \textit{ Representation II} of the coding scheme, this result directly holds for \textit{ Representation II}.

\subsection{Proof of Theorem \ref{thm_main}} \label{thm: main}
Following from Proposition \ref{thm_capacity_achieving_code}, the proposed coding scheme derived from the optimal filter $\mathbb{Q}$ can achieve the feedback capacity $C_{fb}$. Then, in what follows, we only need to show that under the selected initializations of the proposed coding scheme, the following secrecy requirement is satisfied,
$$\lim_{n\rightarrow \infty} \frac{1}{n}I(x_{u\,0}; z_1^n, \tilde{z}_1^n, \hat{z}_1^n)=0.$$
In this proof, we use the \textit{Representation II} of the proposed coding scheme. Following from the model in Fig. \ref{fig:problemModel} and (\ref{equ::u2}), (\ref{equ:input_u}) in the proof of Proposition \ref{prop_channel_input}, the three inputs $z(k)$, $\tilde{z}(k)$ and $\hat{z}(k)$ to Eve for $k \geq 1$ are given by
\begin{equation}\label{equ:1st_wiretap_channel}
\begin{split}
z_1^m &= u_1^m + v_1^m =A_u^{m+1} x_{u\,0} + v_1^m,\\
z(m+1) &=  u(m+1) + v(m+1) = -C_u w_1^m + v(m+1),\\
z(k) &= \phi_k(w_1^{k-1})+v(k), \quad k\geq m+2,\\
\end{split}
\end{equation}
and
\begin{equation}\label{equ:2nd_wiretap_channel}
\begin{split}
\tilde{z}_1^m &= u_1^m + \tilde{v}_1^m + w_1^n = A_u^{m+1} x_{u\,0} + \tilde{v}_1^m + w_1^n,\\
\tilde{z}(m+1) &= u(m+1) + \tilde{v}(m+1) + w(m+1) = -C_u w^m + \tilde{v}(m+1) + w(m+1),\\
\tilde{z}(k) &= \phi_k(w_1^{k-1})+ \tilde{v}(k) + w(k), \quad k\geq m+2,\\
\end{split}
\end{equation}
and
\begin{equation}\label{equ:3rd_wiretap_channel}
\begin{split}
\hat{z}_1^m &= u_1^m + \hat{v}_1^m + w_1^n = A_u^{m+1} x_{u\,0} + \hat{v}_1^m + w_1^n,\\
\hat{z}(m+1) &= u(m+1) + \hat{v}(m+1) + w(m+1) = -C_u w^m + \hat{v}(m+1) + w(m+1),\\
\hat{z}(k) &= \phi_k(w_1^{k-1})+ \hat{v}(k) + w(k), \quad k\geq m+2,\\
\end{split}
\end{equation}
where $u_1^m = A_u^{m+1}x_{u\,0}$ is the selected initializations. Recall that noises $v(k)$, $\tilde{v}(k)$ and $\hat{v}(k)$ are additive noises defined in (\ref{model: wiretap channel}).



Then, for $n \geq k+ \max\lbrace d,\tilde{d},\hat{d}\rbrace+1$ and $k\geq m+1$ we have
\begin{equation} \label{proof:entropy_ineq}
\begin{split}
h(x_{u\,0}|z_1^n, \tilde{z}_1^n,\hat{z}_1^n)\stackrel{(a)}{\geq} & h(x_{u\,0}|z_1^n, \tilde{z}_1^n,\hat{z}_1^n, w_1^{n}, v_{m+1}^n, \tilde{v}_{m+1}^n, \hat{v}_{m+1}^n)\\
\stackrel{(b)}{=} & h(x_{u\,0}|z_1^m, \tilde{z}_1^m, \hat{z}_1^m, w_1^n, v_{m+1}^n, \tilde{v}_{m+1}^n,\hat{v}_{m+1}^n)\\
\stackrel{(c)}{=} & h(x_{u\,0}|A_u^{m+1} x_{u\,0} + v_1^m, A_u^{m+1} x_{u\,0} + \tilde{v}_1^m+ w_1^m,A_u^{m+1} x_{u\,0} + \hat{v}_1^m+ w_1^m, w_1^n, v_{m+1}^n, \tilde{v}_{m+1}^n,\hat{v}_{m+1}^n)\\
= & h(x_{u\,0}|A_u^{m+1} x_{u\,0} + v_1^m, A_u^{m+1} x_{u\,0} + \tilde{v}_1^m,A_u^{m+1} x_{u\,0} + \hat{v}_1^m, w_1^n,v_{m+1}^n, \tilde{v}_{m+1}^n,\hat{v}_{m+1}^n)\\
\stackrel{(d)}{=} & h(x_{u\,0}|A_u^{m+1} x_{u\,0} + v_1^m, A_u^{m+1} x_{u\,0} + \tilde{v}_1^m, A_u^{m+1} x_{u\,0} + \hat{v}_1^m, w_1^n,v_{m+1}^{m+d}, \tilde{v}_{m+1}^{m+\tilde{d}},\hat{v}_{m+1}^{m+\tilde{d}})\\
=& h(x_{u\,0}|A_u^{m+1} x_{u\,0} + v_1^m, A_u^{m+1} x_{u\,0} + \tilde{v}_1^m, A_u^{m+1} x_{u\,0} + \hat{v}_1^m, v_{m+1}^{m+d}, \tilde{v}_{m+1}^{m+\tilde{d}},\hat{v}_{m+1}^{m+\tilde{d}}),\\
\end{split}
\end{equation}
where $(a)$ follows from the fact that conditioning does not increase entropy, step $(b)$ follows from Proposition \ref{prop_channel_input}, step $(c)$ follows from (\ref{equ:1st_wiretap_channel}), (\ref{equ:2nd_wiretap_channel}) and (\ref{equ:3rd_wiretap_channel}), step (d) follows from the finite memory of wiretap channel noise ($v$, $\tilde{v}$ ,$\hat{v}$) and the last step follows from the fact that the noise $w$ is assumed to be independent from others.

Then, we obtain
\begin{equation}
\begin{split}
&I(x_{u\,0};z_1^n, \tilde{z}_1^n, \hat{z}_1^n)\\
= & h(x_{u\,0})- h(x_{u\,0}|z_1^n, \tilde{z}_1^n, \hat{z}_1^n)\\
\leq &h(x_{u\,0})- h(x_{u\,0}|A_u^{m+1} x_{u\,0} + v_1^m, A_u^{m+1} x_{u\,0} + \tilde{v}_1^m, A_u^{m+1} x_{u\,0} + \hat{v}_1^m, v_{m+1}^{m+d}, \tilde{v}_{m+1}^{m+\tilde{d}},\hat{v}_{m+1}^{m+\tilde{d}})\\
= &I(x_{u\,0};A_u^{m+1} x_{u\,0} + v_1^m, A_u^{m+1} x_{u\,0} + \tilde{v}_1^m, A_u^{m+1} x_{u\,0} + \hat{v}_1^m, v_{m+1}^{m+d}, \tilde{v}_{m+1}^{m+\tilde{d}},\hat{v}_{m+1}^{m+\tilde{d}})\\
= & I(x_{u\,0};\mathbb{A} x_{u\,0} + \mathbb{B}), \\
\end{split}
\end{equation}
where $\mathbb{A} = [A_u^{m+1},A_u^{m+1},A_u^{m+1},\pmb{0}]^T$ ($\pmb{0}$ is an $(d+\tilde{d}+\hat{d})\times m$ zero matrix) and $ \mathbb{B} = [v_1^m,\tilde{v}_1^m,\hat{v}_1^m, v_{m+1}^{m+d}, \tilde{v}_{m+1}^{m+\tilde{d}},\hat{v}_{m+1}^{m+\hat{d}}]$. Recall that message $x_{u\,0}$ is uniformly selected from $\lbrace 1,2,\cdots, 2^{nR_s}\rbrace$ that are equally spaced in an $m$-dimensional unit hypercube. The covariance matrix of $x_{u\,0}$ is $\frac{1}{12}I_{m}$ as $n\rightarrow \infty$. Following from the fact that for a fixed covariance a vector Gaussian input distribution maximizes the mutual information, we obtain the following upper bound,
\begin{equation}\label{proof:mutual_info_ineq}
\begin{split}
\lim_{n\rightarrow \infty} \frac{1}{n}I(x_{u\,0};z_1^n, \tilde{z}_1^n, \hat{z}_1^n))&\leq \lim_{n\rightarrow \infty}\frac{1}{n}I(x_{u\,0};\mathbb{A} x_{u\,0} + \mathbb{B})\\
& = \lim_{n\rightarrow \infty}\frac{1}{n} \bigg(h(\mathbb{A} x_{u\,0} + \mathbb{B}) - h(\mathbb{A} x_{u\,0} + \mathbb{B}|x_{u\,0})\bigg)\\
& =\lim_{n\rightarrow \infty}\frac{1}{n}\bigg( h(\mathbb{A} x_{u\,0} + \mathbb{B}) - h(\mathbb{B})\bigg)\\
& \leq \lim_{n\rightarrow \infty}\frac{1}{2n}\log\det\bigg( E[\mathbb{B}\mathbb{B}^T] + \frac{1}{12}\mathbb{A}\mathbb{A}^T \bigg) - h(\mathbb{B})\\
&= 0.
\end{split}
\end{equation}
The last step follows from the fact that the right-hand side upper bound is independent from index $n$. The proof is complete.

\subsection{Proof of Corollary \ref{col:SK_robustness}} \label{col: SK}
Based on the Schalkwijk's scheme in \cite{Schalkwijk66_2}, the channel input (encoder) and the message estimate (decoder) for $k=1$ are given below by using the notations in this paper.
\begin{equation}\label{SK scheme_initial}
\begin{split}
u(1) =& A_u x_{u\,0},\\
\hat{x}_{u\,0}(1) =& A_u^{-1}y(k) = x_{u\,0}) + A_u^{-1}w(1).\\
\end{split}
\end{equation}
The dynamics of the Schalkwijk's coding scheme for $k\geq 2$ can be summarized as follows,
\begin{equation}\label{SK scheme}
\begin{split}
u(k) =& \sqrt{A_u^2-1}A_u^{k-1}(\hat{x}_{u\,0}(k-1) - x_{u\,0}),\\
\hat{x}_{u\,0}(k) = &\hat{x}_{u\,0}(k-1) - A_u^{-k-1}\sqrt{A_u^2-1}y(k),\\
\end{split}
\end{equation}
where  $A_u = \sqrt{\frac{P+\sigma_w^2}{\sigma_w^2}}$ and $\sigma_w$ is the variance of the additive white Gaussian noise in the forward channel.

In our coding scheme, for $k=1$,
\begin{equation}\label{SK scheme_initial}
\begin{split}
u(1) =& A_u^{2} x_{u\,0},\\
\hat{x}_{u\,0}(1) =& {A_u^{-2}}y(1) = x_{u\,0} + A_u^{-2}w(1),\\
\end{split}
\end{equation}
Plugging these selected parameters into (\ref{equ:input_u_orthoganal}) and (\ref{equ:hat_msg}), we have channel inputs and the message estimate of our proposed coding scheme as follows,
\begin{equation}\label{SK scheme}
\begin{split}
u(k) =& \sqrt{A_u^2-1}A_u^k(\hat{x}_{u\,0}(k-1)-x_{u\,0}),\\
\hat{x}_{u\,0}(k) = &\hat{x}_{u\,0}(k-1) - A_u^{-k-2}\sqrt{A_u^2-1}y(k).\\
\end{split}
\end{equation}
By scaling the message $x_{u\,0}$ and the corresponding estimate $\hat{x}_{u\,0}$ by factor $A_u$, we recover the dynamics of the Schalkwijk's scheme. Note that this constant scaling on the message index $x_{u\,0}$ have no effect on the reliable transmission rate and the power cost at channel input. The proof is complete.


\subsection{Proof of Theorem \ref{thm:quantized}}\label{thm: quantized}

Starting from the decoder with $A_u=2^r$ and $B_u=-1$, we have the decoding dynamics ($k\geq 2$)
\begin{equation}
\begin{split}
\hat{x}_u(k)&=A_u\hat{x}_u(k-1)+B_u (y(k-1)+q(k-1))\\
& = A_u (A_u\hat{x}_u(k-2)+B_u (y(k-2)+q(k-2)))+B_u (y(k-1)+q(k-1))\\
& = A_u^2 \hat{x}_u(k-2)+A_u B_u (y(k-2)+q(k-2))+B_u (y(k-1)+q(k-1))\\
& = \cdots\\
& = A_u^{k} \hat{x}_u(0) + B_u \sum_{i=0}^{k-1}A_u^{k-1-i}(y(i)+q(i))\\
& = B_u \sum_{i=0}^{k-1}A_u^{k-1-i}(y(i)+q(i))\\
& = -\sum_{i=0}^{k-1}2^{r(k-1-i)}(y(i)+q(i)).\\
\end{split}
\end{equation}
Then, the estimate of the initial state of the encoder (i.e., the message index) is
\begin{equation}\label{equ: Marin_decoder}
\begin{split}
\hat{x}_{u\,0}(k-1)=  A_u^{-k}\hat{x}_u(k)  =& -\sum_{i=0}^{k-1}2^{-r(i+1)}(y(i)+q(i)).\\
\end{split}
\end{equation}

Next, based on (\ref{equ:input_u_orthoganal}) with $C_s =0$ and $C_u=A_u-\frac{1}{A_u}$, we have the dynamics of channel inputs as
\begin{equation}\label{equ:Marin_encoder}
\begin{split}
u(k)=& C_u A_u^{k} (x_{u\,0}- \hat{x}_{u\,0}(k-1))\\
=& C_u A_u^{k} (x_{u\,0}- A_u^{-k}\hat{x}_u(k))\\
=& C_u A_u^{k} (x_{u\,0}- A_u^{-k}(A_u\hat{x}_u(k-1)+B_u (y(k-1)+q(k-1))))\\
=& C_u A_u^{k} (x_{u\,0}- A_u^{-k+1}\hat{x}_u(k-1)) +C_u B_u (y(k-1)+q(k-1))\\
\stackrel{(a)}{=}& A_u u(k-1) +C_u B_u (y(k-1)+q(k-1))\\
=& 2^r u(k-1)+ (2^{-r}-2^r)(y(k-1)+q(k-1)),\\
\end{split}
\end{equation}
where step (a) follows from the above first two steps. Notice that our coding schemes of the decoder (\ref{equ: Marin_decoder}) and the encoder (\ref{equ:Marin_encoder}) are identical to the coding schemes (2) and (4) in \cite{Martins08}. In addition, Theorem 3.2 in \cite{Martins08} shows that, for a given $r_v$ (Definition \ref{def_rv}), the proposed scheme can achieve any transmission rate $r$ with $r<r_v$.

Now, we need to prove that the proposed coding scheme achieves secrecy with regard to the eavesdropper. In fact, we can directly follow the proof of Proposition \ref{prop_channel_input} and characterize the channel inputs as
\begin{equation}
\begin{split}
u(1) =& A_u^2 x_{u\,0}, \qquad  u(2) = -C_u (w(1)+q(1)),\\
u(k)=&-C_u A_u^{k}\bigg( (1 - A_u^{-1}B_u C_u)^{k-2}\frac{w(1)}{A_u^2} + \sum_{i=2}^{k-1} A_u^{-i-1}(1 - A_u^{-1}B_u C_u)^{k-1-i} B_u (w(i)+q(i))\bigg), \quad k\geq 3.\\
\end{split}
\end{equation}
As a consequence, we note that the channel inputs of the proposed coding scheme only depend on the past forward channel noise $w$ and feedback quantization noise $q$. This fact enables us to show that, by following the proof of Theorem \ref{thm_main}, this coding scheme satisfies the secrecy requirement
$$\lim_{n\rightarrow \infty} \frac{1}{n}I(x_{u\,0}; z_1^n, \tilde{z}_1^n,\hat{z}_1^n)=0.$$

To avoid redundancy, we herein only provide sketchy arguments. Details can be directly obtained by following (\ref{proof:entropy_ineq}) to (\ref{proof:mutual_info_ineq}).
\begin{equation}
\begin{split}
h(x_{u\,0}|z_1^n, \tilde{z}_1^n, \hat{z}_1^n)\geq & h(x_{u\,0}|z_1^n, \tilde{z}_1^n, \hat{z}_1^n, w_1^n,q_1^n, v_{2}^n, \tilde{v}_{2}^n,\hat{v}_{2}^n)\\
= & h(x_{u\,0}|z_1, \tilde{z}_1,\hat{z}_1, w_1^n, q_1^n, v_{2}^n, \tilde{v}_{2}^n, \hat{v}_{2}^n)\\
= & h(x_{u\,0}|A_u^{2} x_{u\,0} + v_1, A_u^{2} x_{u\,0} + \tilde{v}_1+ w_1, A_u^{2} x_{u\,0} + \hat{v}_1+ w_1, w_1^n,q_1^n, v_{2}^n, \tilde{v}_{2}^n, \hat{v}_{2}^n)\\
= & h(x_{u\,0}|A_u^{2} x_{u\,0} + v_1, A_u^{2} x_{u\,0} + \tilde{v}_1, A_u^{2} x_{u\,0} + \hat{v}_1, w_1^n,q_1^n, v_{2}^n, \tilde{v}_{2}^n, \hat{v}_{2}^n)\\
= & h(x_{u\,0}|A_u^{2} x_{u\,0} + v_1, A_u^{2} x_{u\,0} + \tilde{v}_1, A_u^{2} x_{u\,0} + \hat{v}_1, w_1^n,q_1^n, v_{2}^{1+d}, \tilde{v}_{2}^{1+\tilde{d}}, \hat{v}_{2}^{1+\hat{d}})\\
= & h(x_{u\,0}|A_u^{2} x_{u\,0} + v_1, A_u^{2} x_{u\,0} + \tilde{v}_1, A_u^{2} x_{u\,0} + \hat{v}_1, v_{2}^{1+d}, \tilde{v}_{2}^{1+\tilde{d}}, \hat{v}_{2}^{1+\hat{d}}).\\
\end{split}
\end{equation}

Then, we obtain
\begin{equation}
\begin{split}
&I(x_{u\,0};z_1^n, \tilde{z}_1^n, \hat{z}_1^n)\\
\leq &I(x_{u\,0};A_u^{2} x_{u\,0} + v_1, A_u^{2} x_{u\,0} + \tilde{v}_1, A_u^{2} x_{u\,0} + \hat{v}_1, v_{2}^{1+d}, \tilde{v}_{2}^{1+\tilde{d}}, \hat{v}_{2}^{1+\hat{d}})\\
= & I(x_{u\,0};\mathbb{A} x_{u\,0} + \mathbb{B}), \\
\end{split}
\end{equation}
where $\mathbb{A} = [A_u^2,A_u^{2},A_u^{2},\pmb{0}]^T$ ($\pmb{0}$ is an $(d+\tilde{d}+\hat{d})\times 1$ zero matrix) and $ \mathbb{B} = [v_1,\tilde{v}_1,\hat{v}_1,v_{2}^{1+d}, \tilde{v}_{2}^{1+\tilde{d}},\hat{v}_{2}^{1+\hat{d}}]$. The rest of the proof is omitted as it directly follows from (\ref{proof:mutual_info_ineq}).


\section{Conclusion} {\label{sec:conclusion}}
In this paper, we considered the ARMA(k) Gaussian wiretap channel with feedback and showed that a variant of the generalized $S$-$K$ scheme, which is a feedback capacity-achieving code, secures transmissions by itself from the eavesdropper. Namely, the feedback secrecy capacity equals the feedback capacity without the presence of an eavesdropper. We further extended our scheme to the AWGN channel with quantized feedback and proved that our scheme can achieve a positive secrecy rate, which converges to the AWGN channel capacity as the quantization noise decreases to zero.

We conclude this paper by listing a few related research topics, which can facilitate to illustrate the complete picture of the secrecy communications with feedback. First of all, it is known that the $S$-$K$ coding scheme nicely unifies communications, control and estimation for feedback systems. In this paper, by leveraging the tools from both control and communications, we showd that (a variant of) the $S$-$K$ scheme automatically provides secrecy for the legitimate users. Therefore, understanding the secrecy nature of the $S$-$K$ scheme from an estimation perspective could be a missing piece of the work. One possible investigation along this line is to extend the fundamental relation between the derivative of the mutual information and the MMSE \cite{Guo_IMMSE}, known as I-MMSE, from the open-loop channels to the feedback channels by invoking the direct information \cite{Massey1990, Kramer_thesis} rather than the mutual information \cite{Weissman_feedback}. Furthermore, extend the current results to channels with noisy feedback can be very valuable. Toward this end, it is necessary to construct a feedback coding scheme with a good achievable rate for noisy feedback channels with no eavesdropper, which itself is a quite nontrivial problem in general. Finally, extensions to the multi-access colored Gaussian channels with feedback can be of much interest to the community in the field of secrecy communications.

\bibliographystyle{IEEEtran}
\bibliography{ref}

\begin{thebibliography}{10}
\providecommand{\url}[1]{#1}
\csname url@samestyle\endcsname
\providecommand{\newblock}{\relax}
\providecommand{\bibinfo}[2]{#2}
\providecommand{\BIBentrySTDinterwordspacing}{\spaceskip=0pt\relax}
\providecommand{\BIBentryALTinterwordstretchfactor}{4}
\providecommand{\BIBentryALTinterwordspacing}{\spaceskip=\fontdimen2\font plus
\BIBentryALTinterwordstretchfactor\fontdimen3\font minus
  \fontdimen4\font\relax}
\providecommand{\BIBforeignlanguage}[2]{{%
\expandafter\ifx\csname l@#1\endcsname\relax
\typeout{** WARNING: IEEEtran.bst: No hyphenation pattern has been}%
\typeout{** loaded for the language `#1'. Using the pattern for}%
\typeout{** the default language instead.}%
\else
\language=\csname l@#1\endcsname
\fi
#2}}
\providecommand{\BIBdecl}{\relax}
\BIBdecl

\bibitem{Chong17_secrecy}
C.~Li and Y.~Liang, ``Secrecy capacity of the first-order autoregressive moving
  average gaussian channel with feedback,'' \emph{in Proc. IEEE International
  Symposium on Information Theory}, pp. 1963--1967, 2017.

\bibitem{shannon56}
C.~Shannon, ``The zero error capacity of a noisy channel,'' \emph{Inst.Radio
  Eng. Trans.Information Theory}, vol. IT-2, pp. S8--S19, 1956.

\bibitem{Elias1956}
P.~Elias, ``Channel capacity without coding,'' \emph{MIT Research Lab. of
  Electronics, Cambridge, MA, Quarterly Progr. Rep.}, 1956.

\bibitem{Elias1967}
------, ``Networks of \uppercase{G}aussian channels with applications to
  feedback systems,'' \emph{IEEE Trans. Inf. Theory}, vol.~13, pp. 493 -- 501,
  1967.

\bibitem{Schalkwijk66}
J.~P.~M. Schalkwijk and T.~Kailath, ``A coding scheme for additive noise
  channels with feedback i: No bandwidth constraint,'' \emph{IEEE Trans. Inf.
  Theory}, vol. IT-12, no.~2, pp. 172--182, 1966.

\bibitem{Schalkwijk66_2}
J.~P.~M. Schalkwijk, ``A coding scheme for additive noise channels with
  feedback ii: Band-limited signals,'' \emph{IEEE Trans. Inf. Theory}, vol.
  IT-12, no.~2, pp. 183--189, 1966.

\bibitem{Butman69}
S.~Butman, ``A general formulation of linear feedback communication systems
  with solutions,'' \emph{IEEE Trans. Inf. Theory}, vol.~15, no.~3, pp. 392 --
  400, 1969.

\bibitem{Butman76}
------, ``Linear feedback rate bounds for regressive channels,'' \emph{IEEE
  Trans. Inf. Theory}, vol.~22, no.~3, pp. 363 -- 366, 1976.

\bibitem{Wolfowitz75}
J.~Wolfowitz, ``Signalling over a \uppercase{G}aussian channel with feedback
  and autoregressive noise,'' \emph{J. Appl. Probability}, vol.~12, no.~4, pp.
  713 -- 723, 1975.

\bibitem{Ozarow_random90}
L.~H. Ozarow, ``Random coding for additive gaussian channels with feedback,''
  \emph{IEEE Trans. Inf. Theory}, vol.~36, no.~1, pp. 17 -- 22, 1990.

\bibitem{Ozarow_upper90}
------, ``Upper bounds on the capacity of gaussian channels with feedback,''
  \emph{IEEE Trans. Inf. Theory}, vol.~36, no.~1, pp. 151 -- 161, 1990.

\bibitem{cover89}
T.~M. Cover and S.~Pombra, ``Gaussian feedback capacity,'' \emph{IEEE Trans.
  Inf. Theory}, vol.~35, no.~1, pp. 37--43, 1989.

\bibitem{Kim10}
Y.~H. Kim, ``Feedback capacity of stationary gaussian channels,'' \emph{IEEE
  Trans. Inf. Theory}, vol.~56, no.~1, pp. 57--85, 2010.

\bibitem{Ather_feedback_capacity}
A.~Gattami, ``Feedback capacity of gaussian channels revisited,''
  \emph{https://arxiv.org/abs/1511.06866}, 2015.

\bibitem{Li2015_control}
C.~Li and N.~Elia, ``Control approach to computing the feedback capacity for
  stationary finite dimensional gaussian channels,'' \emph{in Proc. 53th Annual
  Allerton Conference on Communication, Control, and Computing}, pp.
  1038--1045, 2015.

\bibitem{Chong_Youla}
------, ``Youla coding and computation of gaussian feedback capacity,''
  \emph{IEEE Trans. Inf. Theory, to appear}, 2018.

\bibitem{Maurer93}
U.~M. Maurer, ``Secret key agreement by public discussion from common
  information,'' \emph{IEEE Trans. Inf. Theory,}, vol.~39, no.~3, pp. 733--742,
  1993.

\bibitem{Ahlswede93}
R.~Ahlswede and I.~Csisz\'ar, ``Common randomness in information theory and
  cryptography, part i: Secret sharing,'' \emph{IEEE Trans. Inf. Theory,},
  vol.~39, no.~4, pp. 1121--1132, 1993.

\bibitem{Ahlswede2006}
R.~Ahlswede and N.~Cai, ``Transmission, identification and common randomness
  capacities for wire-tape channels with secure feedback from the decoder,''
  \emph{in General Theory of Information Transfer and Combinatorics. Springer
  Berlin}, vol. 4123, pp. 258 -- 275, 2006.

\bibitem{poor_07_noisy_fb}
L.~Lai, H.~E. Gamal, and H.~V. Poor, ``Secrecy capacity of the wiretap channel
  with noisy feedback,'' \emph{Proceedings of the 45th Annual Allerton
  Conference on Communication, Control and Computing, Monticello, IL,}, 2007.

\bibitem{Lai08}
------, ``The wiretap channel with feedback: encryption over the channel,''
  \emph{IEEE Trans. Inf. Theory}, vol.~54, no.~11, pp. 5059--5067, 2008.

\bibitem{Ekrem2008}
E.~Ekrem and S.~Ulukus, ``Wiretap channel with secure rate-limited feedback,''
  \emph{in Proc. 42nd Annual Conference on Information Sciences and Systems},
  pp. 791 -- 796, 2008.

\bibitem{Gunduz2008}
D.~G\"und\"uz, D.~R. Brown, and H.~V. Poor, ``Secret communication with
  feedback,'' \emph{in Proc. International Symposium on Information Theory and
  Its Applications}, pp. 1 -- 6, 2008.

\bibitem{Ardestanizadeh2009}
E.~Ardestanizadeh, M.~Franceschetti, T.~Javidi, and Y.-H. Kim, ``Wiretap
  channel with secure rate-limited feedback,'' \emph{IEEE Trans. Inf. Theory},
  vol.~55, no.~12, pp. 5353 -- 5361, 2009.

\bibitem{Dai_15_entropy}
B.~Dai, Z.~Ma, and L.~Yu, ``Feeding back the output or sharing state, which is
  better for the state-dependent degraded wiretap channel with noncausal csi at
  the transmitter?'' \emph{Entropy}, vol.~17, no.~12, pp. 7900--7925, 2015.

\bibitem{dai_noiseless_feedback}
B.~Dai, A.~J.~H. Vinck, Y.~Luo, and Z.~Ma, ``Capacity-equivocation region of a
  special case of wiretap channel with noiseless feedback,'' \emph{Journal of
  Communications and Networks}, vol.~17, no.~1, 2015.

\bibitem{Bassi2015}
G.~Bassi, P.~Piantanida, and S.~Shamai, ``On the capacity of the wiretap
  channel with generalized feedback,'' \emph{in Proc. IEEE International
  Symposium on Information Theory}, pp. 1154 -- 1158, 2015.

\bibitem{dai_17_fb_coding}
B.~Dai, ``An improved feedback coding scheme for the wire-tap channel,''
  \emph{ArXiv, https://arxiv.org/pdf/1712.04981.pdf}, 2017.

\bibitem{poor_07_ma}
X.~Tang, R.~Liu, P.~Spasojevic, and H.~V. Poor, ``Multiple access channels with
  generalized feedback and confidential messages,'' \emph{Proceedings of the
  2007 IEEE Information Theory Workshop on Frontiers in Coding Theory, Lake
  Tahoe, CA,}, 2007.

\bibitem{dai_17}
B.~Dai and Z.~Ma, ``Multiple access wiretap channel with noiseless feedback,''
  \emph{ArXiv, https://arxiv.org/pdf/1701.04052.pdf}, 2017.

\bibitem{poor_09}
T.~Kim and H.~V. Poor, ``The gaussian wiretap channel with noisy public
  feedback: Breaking the high-snr ceiling,'' \emph{Proceedings of the 43rd
  Asilomar Conference on Signals, Systems and Computers, Pacific Grove, CA},
  2009.

\bibitem{Yang2013}
S.~Yang, M.~Kobayashi, P.~Piantanida, and S.~Shamai, ``Secrecy degrees of
  freedom of mimo broadcast channels with delayed csit,'' \emph{IEEE Trans.
  Inf. Theory}, vol.~59, no.~9, pp. 5244 -- 5256, 2013.

\bibitem{Tandon2014}
R.~Tandon, P.~Piantanida, and S.~Shamai, ``On multi-user miso wiretap channels
  with delayed csit,'' \emph{in Proc. IEEE International Symposium on
  Information Theory}, pp. 211 -- 215, 2014.

\bibitem{cvx01}
M.~Grant and S.~Boyd, ``Cvx: Matlab software for disciplined convex
  programming, version 2.0 beta,'' 2013.

\bibitem{cvx02}
------, ``Graph implementations for nonsmooth convex programs, recent advances
  in learning and control,'' \emph{Lecture Notes in Control and Information
  Sciences, Springer}, pp. 95--110, 2008.

\bibitem{Elia2004}
N.~Elia, ``When bode meets shannon: control-oriented feedback communication
  schemes,'' \emph{IEEE Transactions on Automatic Control}, vol.~49, no.~9, pp.
  1477--1488, 2004.

\bibitem{chong_isit11_capacity}
C.~Li and N.~Elia, ``The information theoretic characterization of the capacity
  of channels with noisy feedback,'' \emph{IEEE International Symposium on
  Information Theory}, pp. 174--178, 2011.

\bibitem{Chong11_allerton_upperbound}
------, ``Upper bound on the capacity of gaussian channels with noisy
  feedback,'' \emph{49th Annual Allerton Conference}, pp. 84--89, 2011.

\bibitem{chong.thesis}
C.~Li, ``Fundamental limitations on communication channels with noisy feedback:
  information flow, capacity and bounds,'' \emph{Iowa State University Ph.D.
  dissertation}, 2013.

\bibitem{Martins08}
N.~C. Martins and T.~Weissman, ``Coding for additive white noise channels with
  feedback corrupted by quantization or bounded noise,'' \emph{IEEE Trans. Inf.
  Theory}, vol.~54, no.~9, pp. 4274--4282, 2008.

\bibitem{Guo_IMMSE}
D.~Guo, S.~Shamai, and S.~Verdu, ``The interplay between information and
  estimation measures,'' \emph{Foundations and Trends in Signal Processing},
  vol.~6, no.~4, pp. 243--429, 2012.

\bibitem{Massey1990}
J.~L. Massey, ``Causality, feedback and directed information,'' \emph{In Proc.
  Int. Symp. Inf. Theory Applic.}, pp. 303--305, 1990.

\bibitem{Kramer_thesis}
G.~Kramer, ``Directed information for channels with feedback,'' \emph{Ph.D.
  dissertation, Swiss Federal Institute of Technology (ETH), Zurich,
  Switzeland}, 1998.

\bibitem{Weissman_feedback}
H.~Asnani, K.~Venkat, and T.~Weissman, ``Relations between information and
  estimation in the presence of feedback,'' \emph{In: Como G., Bernhardsson B.,
  Rantzer A. (eds) Information and Control in Networks. Lecture Notes in
  Control and Information Sciences, vol 450. Springer}, 2014.

\end{thebibliography}

\end{document}